\crefname{figure}{Fig.}{Fig.}
\Crefname{figure}{Fig.}{Fig.}
\pgfplotsset{compat=newest}
\renewcommand{\vec}[1]{\boldsymbol{\mathrm{#1}}}
\newtheorem{prop}{Proposition}
\newtheorem{cor}[prop]{Corollary}
\newtheorem{lemma}[prop]{Lemma}
\newtheorem{remark}{Remark}
\definecolor{mycolor1}{rgb}{0,0.5,1}%
\definecolor{mycolor2}{rgb}{0.4,0.9,0.6}%
\definecolor{mycolor3}{rgb}{0.6,0.7,0.4}%
\definecolor{mycolor4}{rgb}{0.8,0.5,0.2}%
\definecolor{mycolor5}{rgb}{0.9,0.4,0.1}%
\title{\LARGE \bf  Analyzing Traffic Delay at Unmanaged Intersections}
\author{Changliu Liu and Mykel J. Kochenderfer
\thanks{C. Liu and M. Kochenderfer are with the Department of Aeronautics and Astronautics, Stanford University, CA 94305 USA (e-mail: \tt\small changliuliu, mykel@stanford.edu).}%
}
\begin{document}

\maketitle

\begin{abstract}
At an unmanaged intersection, it is important to understand how much traffic delay may be caused as a result of microscopic vehicle interactions. Conventional traffic simulations that explicitly track these interactions are time-consuming. Prior work introduced an analytical traffic model for unmanaged intersections. The traffic delay at the intersection is modeled as an event-driven stochastic process, whose dynamics encode microscopic vehicle interactions. This paper studies the traffic delay in a two-lane intersection using the model. We perform rigorous analyses concerning the distribution of traffic delay under different scenarios. We then discuss the relationships between traffic delay and multiple factors such as traffic flow density, unevenness of traffic flows, temporal gaps between two consecutive vehicles, and the passing order. 
\end{abstract}


\section{Introduction}
Delay at intersections affect the capacity of a road network. There are many methods to analyze traffic delay at signalized intersections \cite{mathew2014signalized, xi2015approach, jiang2005traffic}. Such analyses are able to allow better traffic control to minimize delay. With the emergence of autonomous vehicles, there is a growing interest in leaving intersections unmanaged, allowing vehicles to resolve conflicts among themselves \cite{vanmiddlesworth2008replacing}. Unmanaged intersections can reduce infrastructure cost and allow for more flexible road network designs. Various vehicle policies have been proposed for distributed conflict resolution at unmanaged intersections \cite{savic2017distributed, azimi2013reliable, liu2017distributed}. 

It is important to understand how these microscopic policies affect the macroscopic transportation system. Toward the development of an efficient transportation system, we need to quantify the traffic delay generated during vehicle interactions at those intersections. 

Delay at intersections is generally evaluated using microscopic traffic simulation \cite{gora2016traffic}. Various evaluation platforms have been developed \cite{treiber2010open}, including AIMSUN \cite{barcelo1999modelling} and VISSIM \cite{Fellendorf2010}. However, it is time-consuming to obtain the micro-macro relationship by simulation. Only ``point-wise" evaluation can be performed in the sense that a single parametric change in vehicle behaviors requires new simulations. In order to gain a deeper understanding of the micro-macro relationships, an analytical model is desirable. 

In contrast with microscopic simulation models, macroscopic flow models \cite{hoogendoorn2001state} are analytical. Traffic is described by relations among aggregated values such as flow speed and density, without distinguishing its constituent parts. The major advantage of macroscopic flow models is their tractable mathematical structure with relatively few parameters to describe interactions among vehicles. However, it remains challenging to model intersections. Though intersections can be included in the flow models as boundary constraints \cite{CORTHOUT2012343, flotterod2011operational}, it is difficult to model policies other than the first-in-first-out (FIFO) policy. To consider a variety of policies, the vehicles need to be treated as particles that interact with one another, which has not been captured by existing flow models. 

\begin{figure}[t]
\vspace{-20pt}
\begin{center}
\subfloat[\label{fig: intersection a}]{
\includegraphics[width=3cm]{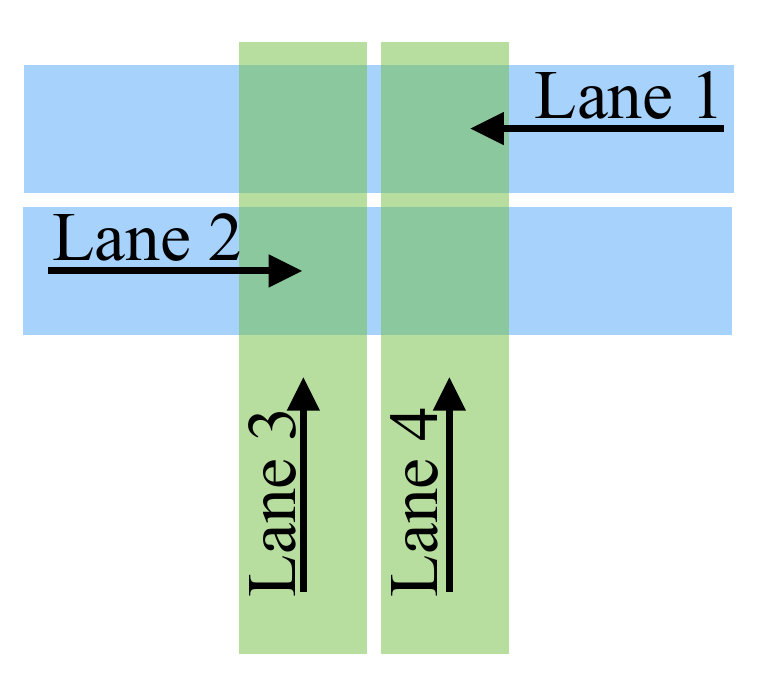}}
\subfloat[\label{fig: intersection b}]{
\includegraphics[width=5cm]{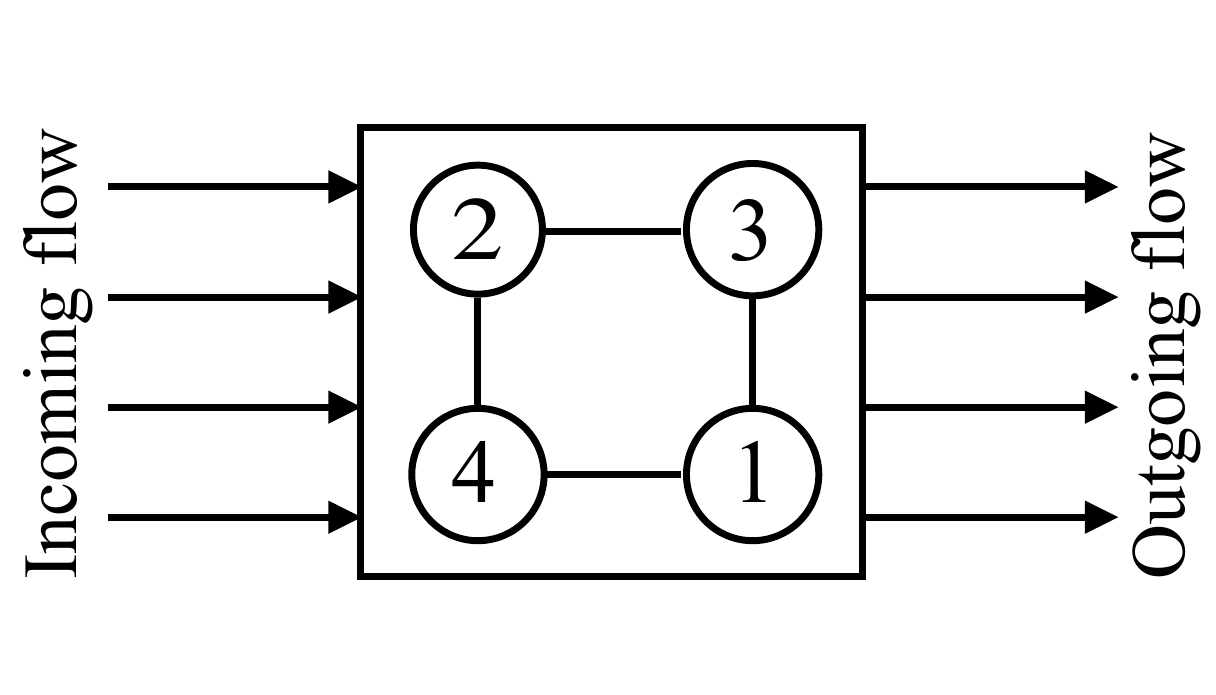}}
\caption{Intersection scenario. (a) Road topology. (b) Conflict graph.}
\label{fig: intersection}
\end{center}
\vspace{-10pt}
\end{figure}

\begin{figure}[t]
\begin{center}
\subfloat[\label{fig: time a}]{
\includegraphics[width=2.7cm]{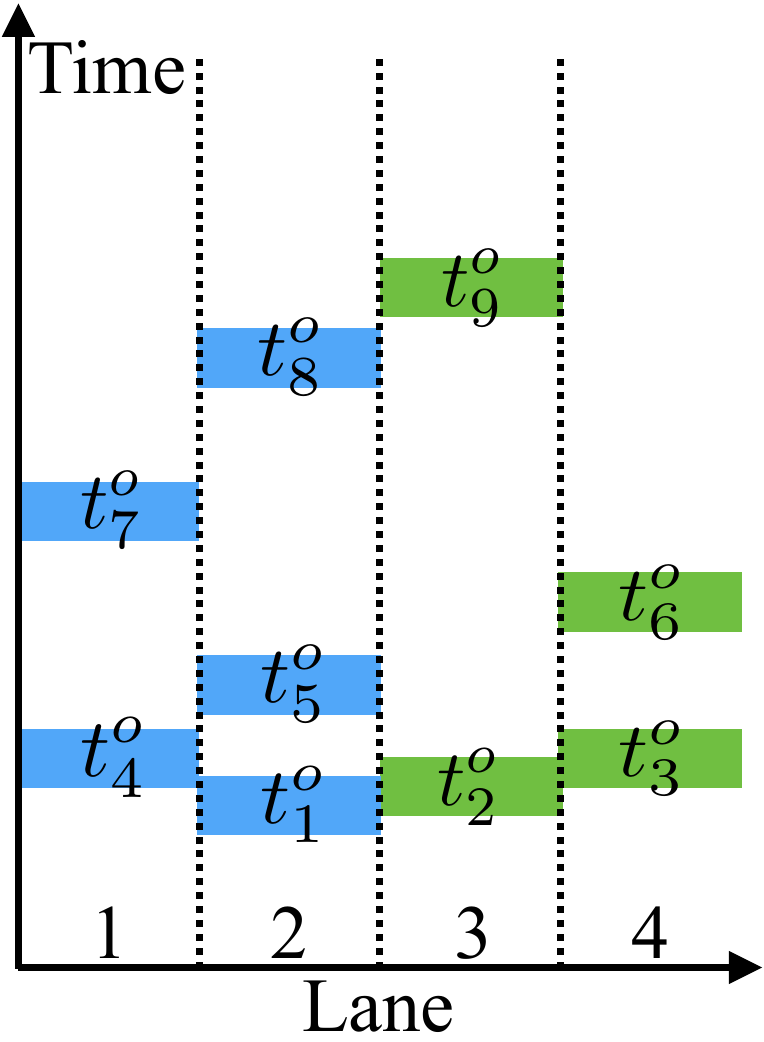}}
\subfloat[\label{fig: time b}]{
\includegraphics[width=2.7cm]{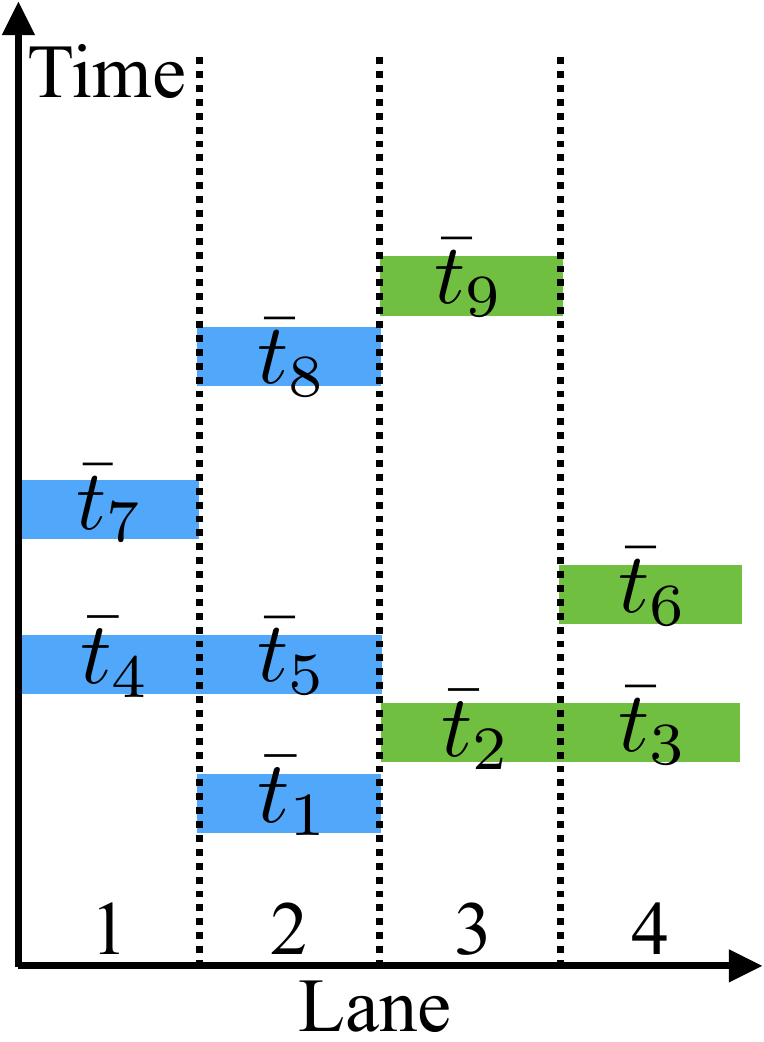}}
\subfloat[\label{fig: time c}]{
\includegraphics[width=2.7cm]{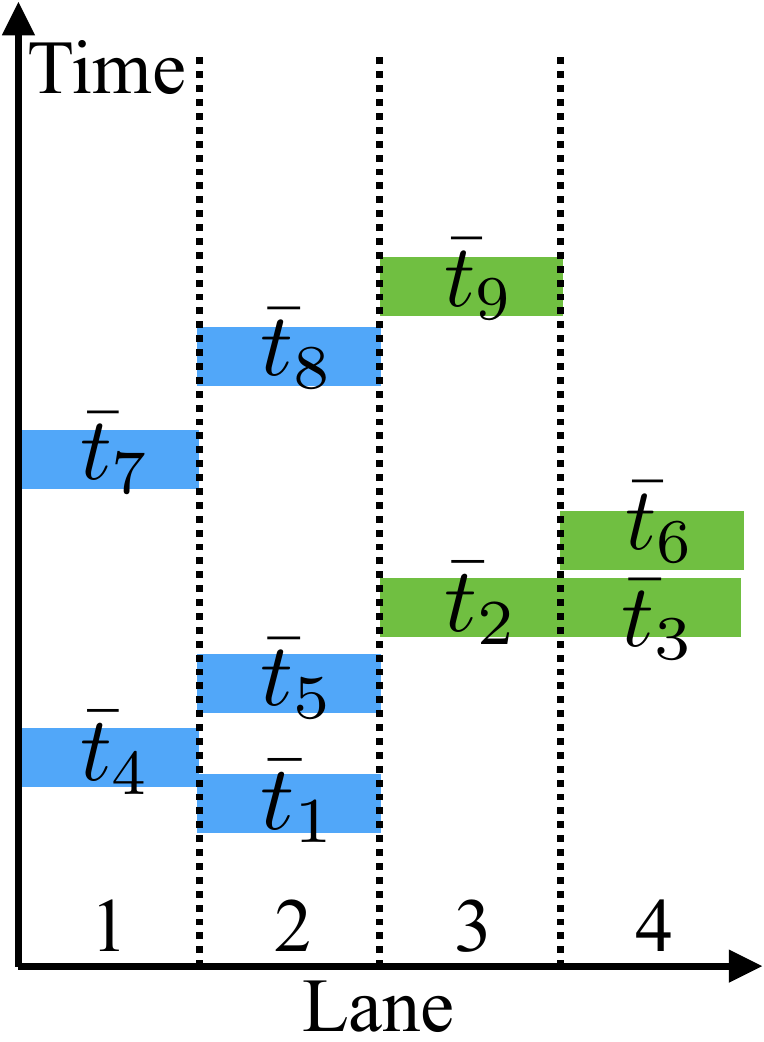}}
\caption{The time of occupancy at the intersection. (a) The desired time of occupancy. (b) The actual time of occupancy under FIFO. (c) The actual time of occupancy under FO.}
\label{fig: time}
\end{center}
\vspace{-10pt}
\end{figure}

The authors introduced an analytical traffic model \cite{liu2018analytically} to describe delays at unmanaged intersections. The model is event-driven, whose dynamics encodes equilibria resulting from microscopic vehicle interactions. It absorbs the advantages of both the microscopic simulation models and the macroscopic flow models. This paper performs detailed delay analysis at unmanaged intersections using the model. 
The following two components in a vehicle policy strongly influence the traffic delay: 1) determination of the passing order, and 2) the temporal gap between two consecutive vehicles to pass the intersection. We will illustrate how these two components as well as the distribution of incoming traffic flows affect delay. 

The major contributions of this paper are:
\begin{enumerate}
\item Illustration of the usage of the analytical traffic model to obtain analytical distributions of delay.
\item Derivation of the analytical distribution of delay under two different classes of policies (i.e., two different passing orders) at a two-lane intersection;
\item Analysis of how traffic delay is affected by multiple factors at the two-lane intersection. 
\end{enumerate}

The remainder of the paper is organized as follows. \Cref{sec: formulation} reviews the analytical traffic model and illustrates how vehicle behaviors are encoded in the model. \Cref{sec: policy} derives the analytical distribution of delay under two different classes of policies. \Cref{sec: result} shows how the traffic delay is affected by multiple factors using the analytical distribution. 
\Cref{sec: conclusion} concludes the paper.

\section{Traffic Model\label{sec: formulation}}
This section reviews an event-driven stochastic model for traffic delay at intersections \cite{liu2018analytically}. 
The following discussion considers an intersection with $K$ incoming lanes. A conflict is where two incoming lanes intersect with each other. These relationships can be described in a conflict graph $\mathcal{G}$ with the nodes being the incoming lanes and the links representing conflicts. \Cref{fig: intersection a} illustrates one possible road configuration with four incoming lanes, and \Cref{fig: intersection b} shows the resulting conflict graph. 

\subsection{Microscopic Interactions\label{sec: vehicle}}
It is assumed that the vehicles at intersections have fixed paths. To respond to others during interactions, the vehicles only change their speed profiles to adjust the time to pass the intersection \cite{altche2016time, qian2017autonomous}. This paper reduces the high dimensional speed profile for vehicle $i$ to a single state $t_i$, which denotes the time for that vehicle to pass the center of the intersection. As the mapping from $t_i$ to the speed profile is surjective, we can analyze interactions using $t_i$'s. 
The desired traffic-free time for vehicle $i$ to pass the intersection is denoted $t_i^o$. 
The vehicles are indexed according to the desired passing time such that $t_i^o\leq t_{i+1}^o$ for all $i$.


At time step $k$, vehicle $i$ decides its passing time based on its desired time $t_i^o$ and its observation of others' passing times at the last time step $t_{-i}(k-1) := [t_1(k-1), \ldots, t_{i-1}(k-1), t_{i+1}(k-1), \ldots]$. The policy of vehicle $i$ is denoted
\begin{eqnarray}
t_i(k) = f(t_i^o, t_{-i}(k-1))\text{.}
\end{eqnarray}
It is assumed that all vehicles use the same policy $f$.

\subsection{Equilibria}
The equilibrium among the first $i$ vehicles is denoted $(\bar t_1^{(i)}, \ldots, \bar t_i^{(i)})$. In an equilibrium, no vehicle is willing to change the passing time before the arrival of the $(i+1)$th vehicle. Hence, the equilibrium is time-invariant, i.e.,
\begin{eqnarray}
\bar t_j^{(i)} = f(t_j^o,\bar t_{-j}^{(i)}),\forall j\leq i\text{.}\label{eq: i equilibrium}
\end{eqnarray}
It is assumed that an equilibrium can be achieved in negligible time. Hence, the system moves from the $i$th equilibrium to the $(i+1)$th equilibrium when the $(i+1)$th vehicle is included. The projected passing time for a vehicle may change from one equilibrium to another equilibrium, but will eventually converge to the actual passing time. The actual passing time $\bar t_i$ for vehicle $i$ is
\begin{eqnarray}
\bar t_i = \lim_{j\rightarrow\infty} \bar t_i^{(j)}\text{.}
\end{eqnarray}

The problem of interest is to quantify the average delay
\begin{eqnarray}
\bar d  =\lim_{N\rightarrow\infty} \frac{1}{N}\sum_{i=1}^N (\bar t_i-t_i^o)= \lim_{N\rightarrow\infty} \frac{1}{N}\sum_{i=1}^N (\bar t_i^{(N)}-t_i^o)\text{.}\label{eq: micro delay}
\end{eqnarray}

\Cref{fig: time a} illustrates the desired time of occupancy for vehicles from the four lanes in \Cref{fig: intersection a}. The bars represent the moments that the intersection is occupied by vehicles, which is centered at $t_i^o$. According to the conflict graph, the scenario in \Cref{fig: time a} is infeasible as vehicles 1, 2, 3, and 4 cannot occupy the intersection at the same time. After some negotiation and adaptation among vehicles, the actual time of occupancy becomes as shown in \Cref{fig: time b} or \Cref{fig: time c}. For an unmanaged intersection, the actual time of occupancy depends on the policies that the vehicles adopt.  \Cref{fig: time b} and \Cref{fig: time c} are different as they correspond to different policies, which will be discussed in detail in \Cref{sec: two policy}. This paper quantifies the effectiveness of the policies based on the resulted average delay.

\subsection{Traffic Model at Intersections}
For quantitative analysis, the traffic is modeled as an event-driven stochastic system with the state being the traffic delay and the input being the incoming traffic flow. The delay for lane $k$ considering $i$ vehicles is denoted $T^k_i$, which captures the difference between the passing time in the $i$th equilibrium and the traffic-free passing time of those vehicles, i.e.,
\begin{eqnarray}
T^k_i = \max_{s_{j}=k, j\leq i} \bar t_j^{(i)} - t_{i}^o\text{,}\label{eq: defn T}
\end{eqnarray}
where $s_j$ is the lane number of vehicle $j$. The input to the traffic model is the random arrival interval $x_i = t_{i+1}^o-t_i^o$ between vehicle $i+1$ and vehicle $i$, and the lane number $s_{i+1}$ of vehicle $i+1$. Define $\mathbf{T}_i:=[T^1_i, \ldots, T^K_i]^T$. The dynamics of the traffic delay follow from
\begin{eqnarray}
\mathbf{T}_{i+1} &=& \mathcal{F}(\mathbf{T}_i, x_i, s_{i+1}),\label{eq: dynamic}
\end{eqnarray}
where the function $\mathcal{F}$ depends on the policy $f$ in \eqref{eq: i equilibrium} and the road topology defined by the conflict graph $\mathcal{G}$ in \Cref{fig: intersection b}. 

It is assumed that the desired passing time of the incoming traffic flow from lane $k$ follows a Poisson distribution with parameter $\lambda_k$. The traffic flows from different lanes are independent of each other. Since the combination of multiple independent Poisson processes is a Poisson process \cite{gardiner2009stochastic}, the incoming traffic from all lanes can be described as one Poisson process $(t_1^o,t_2^o,\ldots)$ with parameter $\lambda = \sum_k \lambda_k$. The probability density for $x_i =x$ is $p_x(x) = \lambda e^{-\lambda x}$. The probability for $s_{i+1} = k$ is $P_s(k) = \frac{\lambda_k}{\lambda}$.

Given \eqref{eq: dynamic}, the conditional probability density of $\mathbf{T}_{i+1}$ given $\mathbf{T}_i$, $x_i$ and $s_{i+1}$ is
\begin{eqnarray}
p_{\mathbf{T}_{i+1}}(\mathbf{t}\mid\mathbf{T}_i, x_i, s_{i+1}) = \delta(\mathbf{t}= \mathcal{F}(\mathbf{T}_i, x_i, s_{i+1}))\text{,}
\end{eqnarray}
where $\delta(\cdot)$ is the Dirac delta function. The total distribution is
\small\begin{eqnarray}
&&p_{\mathbf{T}_{i+1}}(\mathbf{t}) \nonumber\\
&=& \sum_k P_{s}(k) \int_{x} \int_{\vec{\tau}}p_{\mathbf{T}_{i+1}}(\mathbf{t}\mid\vec\tau, x, k)p_{\mathbf{T}_i}(\vec{\tau})d\vec{\tau} p_x(x)dx\nonumber\\
&=& \sum_k P_{s}(k) \int_{\mathcal{F}(\vec\tau, x, k)=\mathbf{t}} \delta(0)p_{\mathbf{T}_i}(\vec\tau)p_{x}(x)d\vec\tau dx\text{,}\label{eq: probability}
\end{eqnarray}\normalsize
which involves integration over a manifold. 
The cumulative probability of $\mathbf{T}_i$ is denoted  $P_{\mathbf{T}_i}(\vec{t})= \int_{-\infty}^{(t^1)^+}\cdots\int_{-\infty}^{(t^k)^+}p_{\mathbf{T}_i}(\tau^1,\ldots,\tau^k) d\tau^1\ldots d\tau^k$ for $\vec{t}=[t^1,\ldots,t^k]$. 

In this paper, we investigate the steady state distribution $p_{\mathbf{T}}:=\lim_{i\rightarrow\infty}p_{\mathbf{T}_{i}}$. Necessary conditions for the convergence of $\lim_{i\rightarrow\infty}p_{\mathbf{T}_{i}}$ are provided in \Cref{sec: policy}. 
For simplicity, define the functional mapping $\mathcal{M}$ as
\begin{equation}
\mathcal{M}(p)(\mathbf{t}) =   \sum_k P_{s}(k)\int_{\mathcal{F}(\vec\tau,x,k)=\mathbf{t}} \delta(0) p_{x}(x)p(\vec\tau) dxd\vec\tau \text{.}\label{eq: mapping m}
\end{equation}
The steady state distribution $p_{\mathbf{T}}$ is a fixed point under $\mathcal{M}$.

\subsection{Usage of the Model}
Under the model, the distribution of vehicle delay can either be obtained through direct analysis or  event-driven simulation.

\subsubsection{Theoretical Analysis}\label{sec: method analysis}
The vehicle delay introduced by the $(i+1)$th vehicle is
\begin{eqnarray}
d_{i+1} = \sum_{j\leq i} \left(\bar t_j^{(i+1)} - \bar t_j^{(i)}\right)+\bar t_{i+1}^{(i+1)} - t_{i+1}^{*}\text{.}\label{eq: delay}
\end{eqnarray}
In the case that the introduction of a new vehicle only affects the last vehicle in other lanes (which is usually the case),
\begin{eqnarray}
d_{i+1} = T_{i+1}^{s_{i+1}} + \sum_{k\neq s_{i+1}} (T_{i+1}^k - T_{i}^k + x_i)\text{.} \label{eq: scalar delay and traffic delay}
\end{eqnarray}

Hence, to obtain an analytical steady state distribution of vehicle delay, we need to 1) obtain \eqref{eq: dynamic} from microscopic interactions models, then 2) solve the fixed point problem $\mathcal{M}(p)=p$ for the steady state distribution $p_{\mathbf{T}}$, and finally 3) compute the steady state distribution of vehicle delay $p_d$ from $p_{\mathbf{T}}$ by \eqref{eq: scalar delay and traffic delay}. \Cref{sec: policy} illustrates the procedures for the derivation. 

The relationship between $\bar d$ in \eqref{eq: micro delay} and $d_i$ in \eqref{eq: delay} is
\begin{eqnarray}
\bar d =  \lim_{N\rightarrow\infty} \frac{1}{N}\sum_{i}d_{i}\text{.}
\end{eqnarray}
According to the central limit theorem, the system is ergodic such that the average delay of all vehicles equals the expected delay introduced by a new vehicle (moving from one equilibrium to another equilibrium) in the steady state,
\begin{eqnarray}
E(\bar d) = \lim_{i\rightarrow\infty} E(d_{i})\text{.}
\end{eqnarray}

\subsubsection{Event-Driven Simulation (EDS)}
The transition of the distribution from one equilibrium to another can also be simulated. Unlike conventional time-driven traffic simulation, we can perform event-driven simulation, which is more efficient. Many particles need to be generated for $\mathbf{T}_0$, each corresponding to one traffic scenario. Those particles are then propagated according to \eqref{eq: dynamic} by randomly sampling $x_i$ and $s_{i+1}$. As the particles propagate, either the distribution diverges or we obtain the steady state distribution of delay.

\section{Steady State Distribution of Delay\label{sec: policy}}

This section derives the steady state distribution of delay under two classes of frequently used policies in a two-lane intersection using the method discussed in \Cref{sec: method analysis}. The two policies are the first-in-first-out (FIFO) policy and the flexible order (FO) policy, which entail different passing orders. The required temporal gap between vehicles from different directions is denoted $\Delta_d$. The required temporal gap between vehicles from the same direction is denoted $\Delta_s$. The gap is affected by the following factors: vehicle speed, uncertainties in perception, and etc. 

\subsection{Vehicle Policies\label{sec: two policy}}
The two classes of policies correspond to two ways to determine the passing order.

\subsubsection{FIFO}
The passing order is solely determined according to the arrival time (which is taken to be the desired passing time $t_i^o$). The actual passing time for vehicle $i$ should be after the actual passing times for all conflicting vehicles $j$ such that $j<i$.\footnote{Some authors define FIFO to be such that vehicle $i$ should yield to vehicle $j$ for all $j<i$ no matter there is a conflict or not. The FIFO strategy presented in this paper is similar to the Maximum Progression Intersection Protocol (MP-IP) \cite{azimi2013reliable}. Nonetheless, there is no difference between the two in the two-lane scenario.} As the passing order is fixed, the actual passing time will not be affected by later vehicles, i.e., $\bar t_j = \bar t_j^{(i)} = \bar t_j^{(j)}$ for all $i> j$. For vehicle $i$,
\begin{equation}
\bar t_i^{(i)} := \max \{t_i^o, \mathcal{D}_i, \mathcal{S}_i\}\text{,}\label{eq: fifo micro}
\end{equation}
where $\mathcal{D}_i$ is the earliest passing time considering vehicles from other lanes, and $\mathcal{S}_i$ is the earliest passing time considering vehicles from the ego lane.
\begin{subequations}
\begin{align}
\mathcal{D}_i &= \max_{j}(\bar t_j^{(i)}+\Delta_d)\text{ s.t. } j<i, (s_{j},s_i)\in \mathcal{G}\text{,}\\
\mathcal{S}_i &= \max_{j}(\bar t_j^{(i)}+\Delta_s)\text{ s.t. } j<i, s_{j}=s_i\text{.}
\end{align}
\end{subequations}
The effect of FIFO is illustrated in \Cref{fig: time b}.

%
%
%

\subsubsection{FO}

\begin{figure}[t]
\begin{center}
\subfloat[Domain]{
\includegraphics[width=4.3cm]{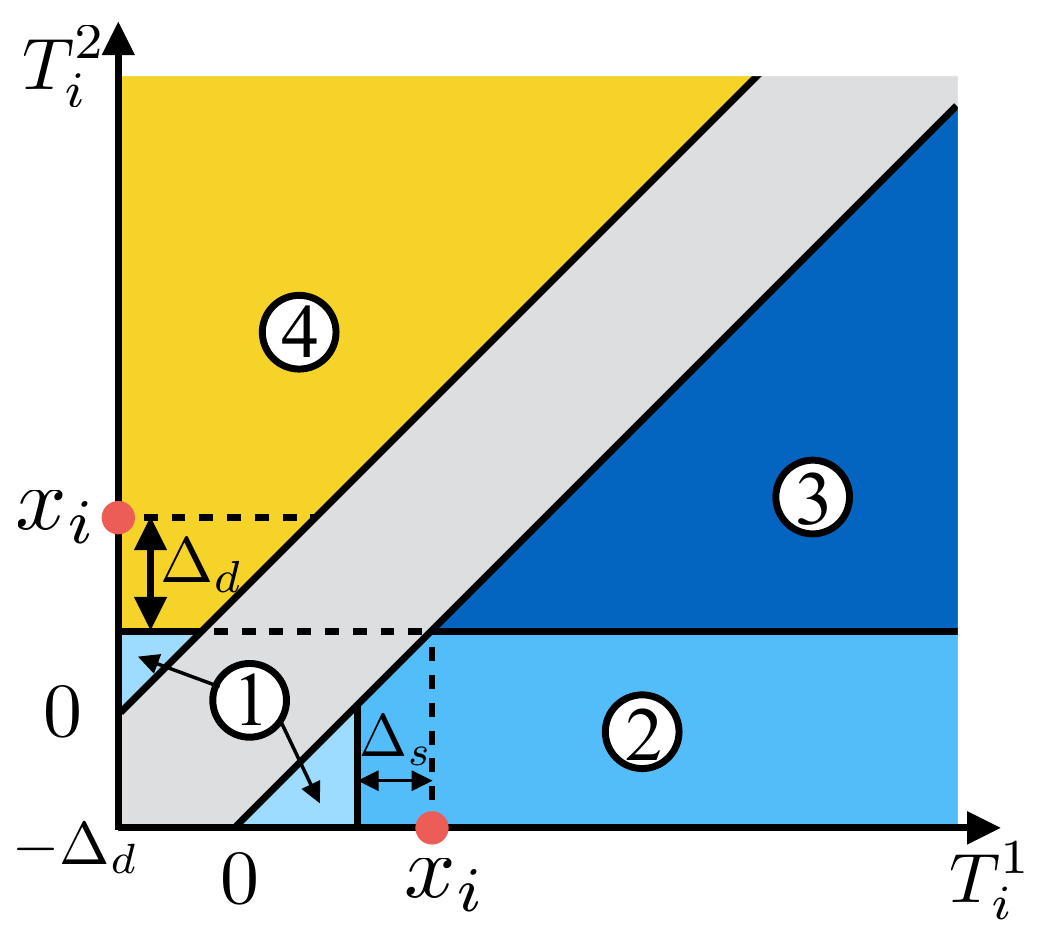}}
\subfloat[Value]{
\includegraphics[width=4.2cm]{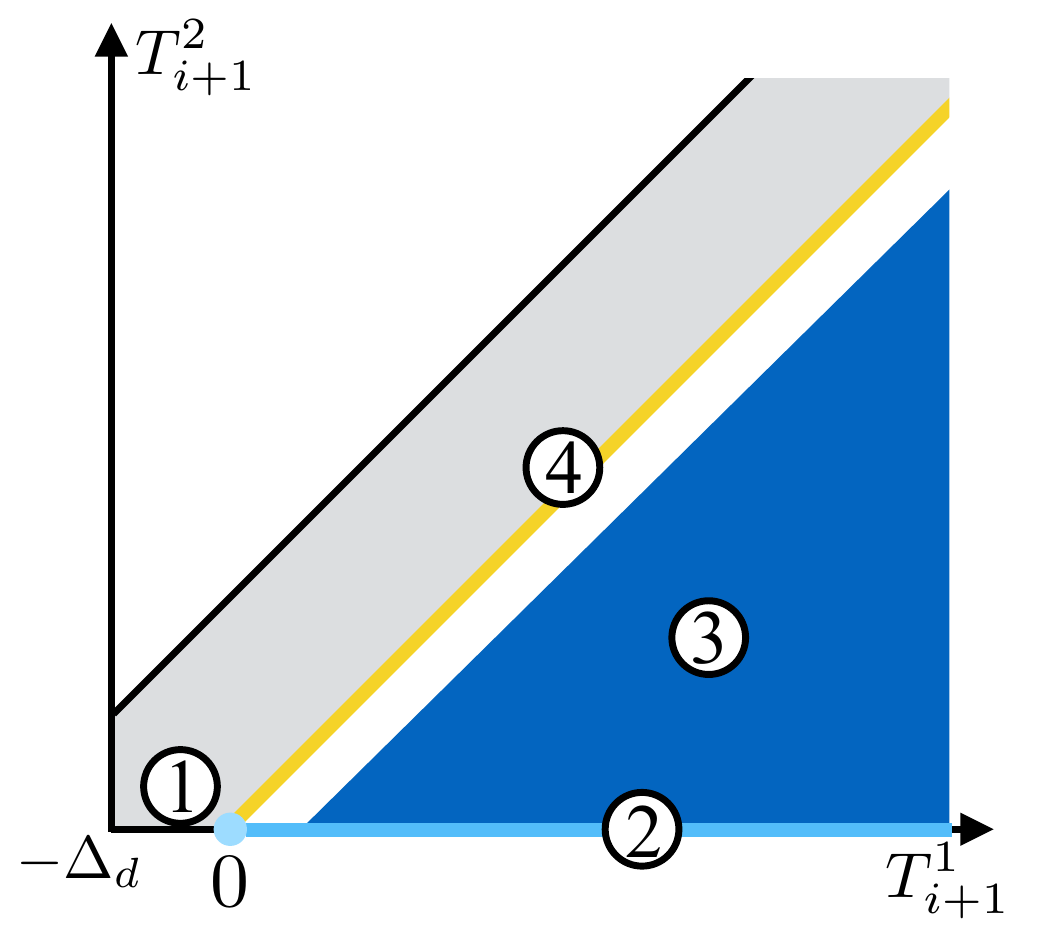}}
\caption{Illustration of the mapping \eqref{eq: dynamic} under FIFO for $s_{i+1} = 1$.}
\label{fig: FIFO mapping}
\end{center}
\vspace{-10pt}
\end{figure}

This strategy allows high priority vehicles to yield to low priority vehicles if low priority vehicles can arrive earlier. The passing order may change over time.  At step $i$, let $\bar t_i^{(i-1)} := \max \{t_i^o, \max_{j<i, s_j=s_i}(\bar t_j^{(i-1)}+\Delta_s)\}$ be the earliest possible time for vehicle $i$ to pass considering its front vehicles in the ego lane. Sort the list $(\bar t_1^{(i-1)},\ldots,\bar t_{i-1}^{(i-1)},\bar t_i^{(i-1)})$ in ascending order and record the ranking in $Q:\mathbb{N}\rightarrow\mathbb{N}$. If there is a tie, the vehicle with a smaller index is given a smaller $Q$ value. For the first vehicle in $Q$, i.e., vehicle $k = Q^{-1}(1)$, the passing time is $\bar t_k^{(i)} : = \bar t_k^{(i-1)}$. By induction, assuming that $\bar t_j^{(i)}$ for $Q(j)<Q(k)$ has been computed, then
\begin{equation}
\bar t_k^{(i)} : = \max \{\bar t_k^{(i-1)}, \mathcal{D}_k^i,\mathcal{S}_k^i\}\text{,}\label{eq: fo micro}
\end{equation}
where 
\begin{subequations}
\begin{align}
\mathcal{D}_k^i &= \max_{j}(\bar t_j^{(i)}+\Delta_d)\text{ s.t. } Q(j)<Q(k), (s_{j},s_k)\in \mathcal{G}\text{,}\\
\mathcal{S}_k^i &= \max_{j}(\bar t_j^{(i)}+\Delta_s)\text{ s.t. } Q(j)<Q(k), s_{j}=s_k\text{.}
\end{align}
\end{subequations}

Under FO, the actual passing time may change over time. There is a distributed algorithm \cite{liu2017distributed} to implement this policy where the vehicles do not necessarily need to compute the global passing order. The effect of FO is illustrated in \cref{fig: time c}. Vehicles in the same direction tend to form groups and pass together. For a two-lane intersection, the passing order is changed if and only if the next vehicle can pass the intersection earlier than the last vehicle in the other lane.


\subsection{Case 1: Delay under FIFO}
Following from \eqref{eq: defn T} and \eqref{eq: fifo micro}, the dynamic equation \eqref{eq: dynamic} for FIFO can be computed, which is listed in \Cref{table: FIFO mapping} and illustrated in \Cref{fig: FIFO mapping}. Only the case for $s_{i+1}=1$ is shown. Define a conjugate operation $(\cdot)^*$ as $i^* := 3-i$. The case for $s_{i+1}=2$ can be obtained by taking the conjugate of all superscripts. In order to bound the domain from below, let $T_i^{j} = \max\{T_i^{j},-\Delta_d\}$ for all $i$ and $j \in\{ 1,2\}$. There are four smooth components in the mapping as illustrated in \Cref{fig: FIFO mapping} and \Cref{table: FIFO mapping}. Region~1 corresponds to the case that there is enough gap in both lanes for vehicle $i+1$ to pass without any delay. Regions~2 and 3 correspond to the case that the last vehicle is from the ego lane and it causes delay for vehicle $i+1$. Region~4 corresponds to the case that the last vehicle is from the other lane and it causes delay for vehicle $i+1$.

\begin{table}[t]
\caption{The mapping \eqref{eq: dynamic} under FIFO for $s_{i+1} = 1$.}
\vspace{-10pt}
\begin{center}
\begin{tabular}{ccc}
\toprule
Region & Condition & Value\\
\midrule
1 & $\begin{array}{c}T_i^1<x_i-\Delta_s \\ T_i^2<x_i-\Delta_d\end{array}$ & $\begin{array}{cc} T_{i+1}^1 = 0 \\ T_{i+1}^2 = -\Delta_d\end{array}$\\
\midrule
2 & $\begin{array}{c}T_i^1\geq x_i-\Delta_s \\ T_i^2<x_i-\Delta_d \\ T_i^2<T_i^1\end{array}$ & $\begin{array}{cc} T_{i+1}^1 = T_i^1+\Delta_s-x_i \\ T_{i+1}^2 = -\Delta_d\end{array}$\\
\midrule
3 & $\begin{array}{c}T_i^2\geq x_i-\Delta_d \\ T_i^2<T_i^1\end{array}$ & $\begin{array}{cc} T_{i+1}^1 = T_i^1+\Delta_s-x_i \\ T_{i+1}^2 = T_i^2-x_i\end{array}$\\
\midrule
4 & $\begin{array}{c}T_i^2\geq x_i-\Delta_d \\ T_i^2>T_i^1\end{array}$ & $\begin{array}{cc} T_{i+1}^1 = T_i^2+\Delta_d-x_i \\ T_{i+1}^2 = T_i^2-x_i\end{array}$\\
\bottomrule
\end{tabular}
\end{center}
\label{table: FIFO mapping}
\end{table}%

%

Given the dynamic equation, the probability \eqref{eq: probability} can be computed. For simplicity, we only show the case for $t^1>t^2$. The case for $t^1<t^2$ is symmetric. 
When $t^2=-\Delta_d$, $P_{\mathbf{T}_{i+1}}(t^1,-\Delta_d)=$
\begin{eqnarray}
P_s(1)\int_{0}^\infty P_{\mathbf{T}_i}(t^1+x-\Delta_s,x-\Delta_d)p_xdx\text{.}\label{eq: fifo 0}
\end{eqnarray}
When $t^1 = t^2 + \Delta_d$, $p_{\mathbf{T}_{i+1}}(t^1,t^2) =$
\begin{eqnarray}
 P_s(1)\int_{0}^\infty \int_{-\Delta_d}^{t^2+x-\Delta_d} p_{\mathbf{T}_i}(\tau, t^2+x)d\tau p_x dx\text{.}\label{eq: fifo diagonal}
\end{eqnarray}
For $t^1>t^2 +\Delta_d > 0 $, $p_{\mathbf{T}_{i+1}}(t^1,t^2)=$
\begin{eqnarray}
P_s(1)\int_{0}^\infty p_{\mathbf{T}_i}(t^1-\Delta_s+x, t^2+x)p_xdx\text{.}\label{eq: FIFO t1>t2}
\end{eqnarray}


\begin{prop}[Necessary Condition for Convergence under FIFO]\label{prop: convergence FIFO}
The distributions $\{p_{\mathbf{T}_i}\}_i$ converges for FIFO only if the following condition holds
\begin{equation}
2\lambda_1\lambda_2\Delta_d+[\lambda_1^2+\lambda_2^2]\Delta_s \leq \lambda\text{.}\label{eq: convergence FIFO}
\end{equation}
\end{prop}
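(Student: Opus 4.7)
The plan is to reduce the vector recursion \eqref{eq: dynamic} to a scalar Lindley recursion on the per-vehicle delay $D_i := \bar t_i - t_i^o$, and then invoke the classical stability criterion for reflected random walks. First I would argue that in the two-lane setting under FIFO the binding constraint on $\bar t_{i+1}$ is always the immediately preceding vehicle. Because both lanes conflict, every prior vehicle $j$ contributes a constraint $\bar t_j + \Delta_{s_j,s_{i+1}}$ by \eqref{eq: fifo micro}, but since the passing times are nondecreasing and (under the natural assumption $\Delta_s \leq \Delta_d$) the gap $\bar t_i - \bar t_j$ already absorbs the $\Delta_d - \Delta_s$ mismatch, the constraint from vehicle $i$ itself dominates. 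This yields the scalar update $\bar t_{i+1} = \max\{t_{i+1}^o,\, \bar t_i + \Delta_i\}$ with $\Delta_i := \Delta_s$ if $s_i = s_{i+1}$ and $\Delta_i := \Delta_d$ otherwise. Subtracting $t_{i+1}^o = t_i^o + x_i$ produces the Lindley recursion $D_{i+1} = \max\{0,\, D_i + Y_i\}$ with $Y_i := \Delta_i - x_i$.

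Next I would observe that $\{Y_i\}$ is i.i.d. The lane labels $s_i$ are i.i.d.\ with $P(s_i = k) = \lambda_k/\lambda$ (a superposition of independent Poisson streams splits this way), the interarrivals $x_i$ are i.i.d.\ exponential with rate $\lambda$, and these two sequences are independent; in particular $\Delta_i$ depends only on $(s_i,s_{i+1})$ and is independent of $x_i$. I would then invoke the standard necessary condition for a reflected random walk to admit a stationary distribution, namely $E[Y_i] \leq 0$: if $E[Y_i] > 0$, iterating gives $D_n \geq \sum_{j=1}^{n-1} Y_j$, and the strong law of large numbers forces $D_n \to \infty$ almost surely, so no proper stationary distribution exists.

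It remains to compute $E[Y_i]$ explicitly. Since consecutive lane labels are independent, $P(s_i = s_{i+1}) = (\lambda_1^2 + \lambda_2^2)/\lambda^2$ and $P(s_i \neq s_{i+1}) = 2\lambda_1\lambda_2/\lambda^2$, so
\begin{equation}
E[\Delta_i] = \frac{(\lambda_1^2 + \lambda_2^2)\Delta_s + 2\lambda_1\lambda_2\Delta_d}{\lambda^2}.
\end{equation}
Combined with $E[x_i] = 1/\lambda$, the condition $E[\Delta_i] \leq E[x_i]$ becomes, after clearing denominators, precisely \eqref{eq: convergence FIFO}. Finally, since $T_i^{s_i} = D_i$ (vehicle $i$ is itself the latest vehicle in its own lane after step $i$), divergence of $D_i$ precludes convergence of the marginal of $p_{\mathbf{T}_i}$ in coordinate $s_i$, hence of $p_{\mathbf{T}_i}$ itself.

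The main obstacle is the first step: showing that cross-constraints from older vehicles never bind tighter than the immediate predecessor, so that the two-dimensional recursion collapses to a scalar Lindley one. This calls for a short case analysis distinguishing $s_i = s_{i+1}$ from $s_i \neq s_{i+1}$, and uses that consecutive FIFO passing times are already separated by at least $\min(\Delta_s,\Delta_d)$; without this reduction the dynamics remain genuinely two-dimensional and the scalar Lindley criterion does not apply directly.
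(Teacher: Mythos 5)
Your proposal is correct and arrives at exactly the same counting as the paper --- consecutive vehicles share a lane with probability $(\lambda_1^2+\lambda_2^2)/\lambda^2$ and differ with probability $2\lambda_1\lambda_2/\lambda^2$, so the expected minimum departure separation is compared against the mean arrival interval $1/\lambda$ --- but you get there by a more rigorous route. The paper simply asserts that convergence of the distribution forces the ``minimum average departure interval'' to be at most the average arrival interval; it does not justify why convergence in distribution controls expectations, nor why the expectation comparison is the right stability criterion. Your Lindley reduction supplies exactly that justification: from $D_{i+1}\geq D_i+Y_i$ on the event the reflection is inactive, divergence of $\sum_j Y_j$ under the strong law forces $D_n\to\infty$ and hence rules out a proper limit for $p_{\mathbf{T}_i}$ via $T_i^{s_i}=D_i$. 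Two small refinements: first, $\{Y_i\}$ is not i.i.d.\ --- $\Delta_i$ and $\Delta_{i+1}$ both depend on $s_{i+1}$ --- but the sequence is stationary and $1$-dependent, which is enough for the strong law you invoke. Second, the step you flag as the main obstacle (showing the two-dimensional recursion collapses exactly to a scalar one, which needs a case analysis and the assumption $\Delta_s\leq 2\Delta_d$) is unnecessary for the necessity direction: in the two-lane conflict graph every pair of vehicles conflicts, so \eqref{eq: fifo micro} directly gives the one-sided bound $\bar t_{i+1}\geq \bar t_i+\Delta_i$ with $\Delta_i\in\{\Delta_s,\Delta_d\}$, and the lower bound $D_{i+1}\geq \max\{0,D_i+Y_i\}$ is all the divergence argument uses. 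The exact collapse is only needed if you also want sufficiency, which the paper likewise leaves open.
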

\begin{proof}
The convergence of the distribution implies the convergence of the expected delay. Hence, the minimum average departure interval between two consecutive vehicles should be smaller than the average arrival interval. For two consecutive vehicles, the probability that they are from the same lane is $P_s(1)^2+P_s(2)^2$, and the probability that they are from different lanes is $2P_s(1)P_s(2)$. Hence, the minimum average departure interval is $2P_s(1)P_s(2)\Delta_d+\left[P_s(1)^2+P_s(2)^2\right]\Delta_s$. The average arrival interval is $\frac{1}{\lambda}$. The convergence of the distribution implies 
\begin{equation}
2P_s(1)P_s(2)\Delta_d+\left[P_s(1)^2+P_s(2)^2\right]\Delta_s\leq \frac{1}{\lambda}\text{.}\label{eq: convergence FIFO proof}
\end{equation}
Condition \eqref{eq: convergence FIFO} can be obtained by rearranging \eqref{eq: convergence FIFO proof}.
\end{proof}

The proof of the sufficiency of \eqref{eq: convergence FIFO} is left as future work. In the following discussion, we investigate the steady state distribution $p_{\mathbf{T}} = \mathcal{M}(p_{\mathbf{T}})$ for  $\Delta_s>0$ and $\Delta_s=0$.


\begin{figure}[t]
\begin{center}
\subfloat[FIFO\label{fig: fifo zb}]{\includegraphics[width=4.1cm]{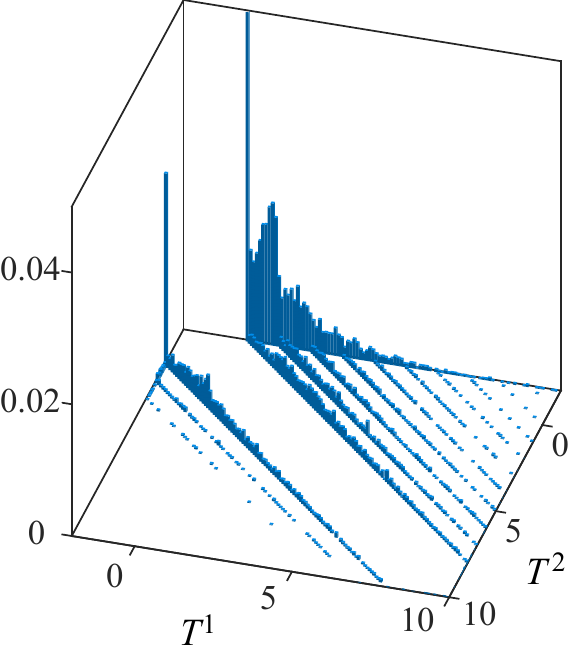}}~
\subfloat[FO\label{fig: fo zb}]{\includegraphics[width=4.1cm]{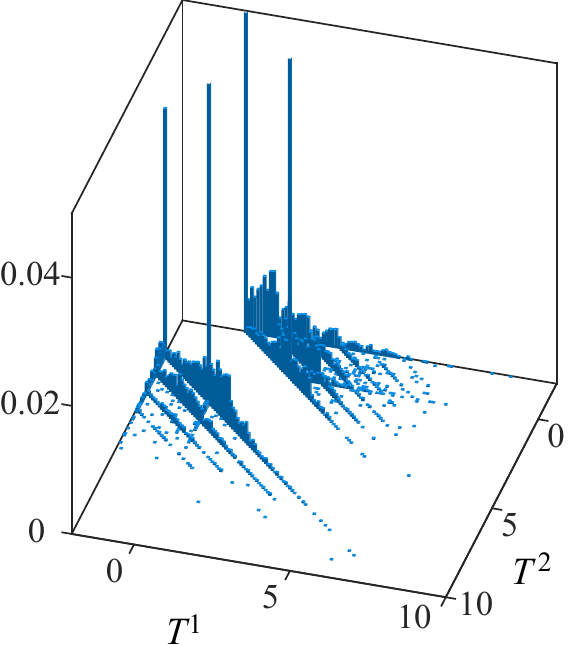}}
\caption{The steady state distribution $p_{\mathbf{T}}$ for $\lambda_1=\SI[mode=text]{0.1}{\per\second}$, $\lambda_2=\SI[mode=text]{0.5}{\per\second}$, $\Delta_d=\SI[mode=text]{2}{\second}$, and $\Delta_s=\SI[mode=text]{1}{\second}$ from EDS with $10000$ particles.}
\end{center}
\vspace{-10pt}
\end{figure}


\begin{prop}[Steady State Distribution for $\Delta_s>0$ under FIFO]\label{prop: fifo delta_s>0}
When $\Delta_s >0$,  for $t>-\Delta_d$ and $\Gamma>\Delta_d$,  the following equalities hold,
\begin{subequations}
\begin{align}
p_{\mathbf{T}}(t+\Gamma, t) &= C_n^1 \int_{0}^\infty p_{\mathbf{T}}(\hat t+\gamma, \hat t)e^{-\lambda z}z^{n-1}dz\text{,}\label{eq: induction 1}\\
p_{\mathbf{T}}(t, t+\Gamma) &= C_n^2 \int_{0}^\infty p_{\mathbf{T}}(\hat t, \hat t+\gamma)e^{-\lambda z}z^{n-1}dz\text{,}\label{eq: induction 2}
\end{align}
\end{subequations}
where $n$ is the maximum integer such that $\gamma:=\Gamma -n\Delta_s\in(\Delta_d-\Delta_s,\Delta_d]$, $C_n^i =  \frac{\lambda_i^n}{(n-1)!}$, and $\hat t = t+z$. 
Moreover, $p_{\mathbf{T}}(t^1,t^2) = 0$ if $\min\{t^1,t^2\}>-\Delta_d$ and $| t^1-t^2| \neq \Delta_d + n\Delta_s$ for any $n\in\mathbb{N}$. 
\end{prop}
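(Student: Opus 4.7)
The plan has two parts: establish the support claim that $p_{\mathbf{T}}$ vanishes off the discrete union of lines, and then derive \eqref{eq: induction 1} by induction on $n$ (with \eqref{eq: induction 2} following by the $1\leftrightarrow 2$ lane symmetry, after which all formulas are obtained by replacing $\lambda_1$ with $\lambda_2$). Both arguments exploit the fixed-point equation $p_{\mathbf{T}}=\mathcal{M}(p_{\mathbf{T}})$ together with the explicit preimage structure encoded in \Cref{table: FIFO mapping}.

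For the support claim, I would fix an interior point $(t^1,t^2)$ with $\min\{t^1,t^2\}>-\Delta_d$ and, WLOG, $t^1>t^2$, then rule out preimages region by region. Regions~1 and 2 of both $s_{i+1}\in\{1,2\}$ map their outputs onto a boundary axis $\{T_{i+1}^{j}=-\Delta_d\}$ and so contribute nothing to an interior point. Region~4 under $s_{i+1}=1$ outputs exactly on $T_{i+1}^1-T_{i+1}^2=\Delta_d$, while Region~4 under $s_{i+1}=2$ lands on the opposite diagonal, and Region~3 under $s_{i+1}=2$ yields $T_{i+1}^1<T_{i+1}^2$, which is ruled out. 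Hence for $t^1-t^2\neq\Delta_d$ only Region~3 under $s_{i+1}=1$ can contribute, producing exactly the identity \eqref{eq: FIFO t1>t2}; because Region~3 additionally requires $T_i^1>T_i^2$, no preimage exists whenever $0<t^1-t^2\leq\Delta_s$, so $p_{\mathbf{T}}\equiv 0$ there. A straightforward induction using \eqref{eq: FIFO t1>t2} (each step decreases the difference by $\Delta_s$) then extends the vanishing to all of $(0,\Delta_d)$. Iterating \eqref{eq: FIFO t1>t2} downward from $\Gamma>\Delta_d$ with $\Gamma\neq\Delta_d+n\Delta_s$ eventually pushes the integrand's difference into $(\Delta_d-\Delta_s,\Delta_d)\subset(0,\Delta_d)$, forcing $p_{\mathbf{T}}(t^1,t^2)=0$.

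For \eqref{eq: induction 1}, the base case $n=1$ is a direct rewriting of \eqref{eq: FIFO t1>t2} using $\lambda P_s(1)=\lambda_1=C_1^1$ and $\hat t=t+z$. For the step $n\to n+1$, take $\Gamma$ with $\gamma:=\Gamma-(n+1)\Delta_s\in(\Delta_d-\Delta_s,\Delta_d]$ and set $\Gamma':=\Gamma-\Delta_s$; apply \eqref{eq: FIFO t1>t2} to $p_{\mathbf{T}}(t+\Gamma,t)$ and then substitute the inductive hypothesis for $p_{\mathbf{T}}(\hat t+\Gamma',\hat t)$ inside the integral. This produces the double integral
\[
\lambda_1 C_n^1 \int_0^\infty\!\!\int_0^\infty p_{\mathbf{T}}(t+z+u+\gamma,t+z+u)\,e^{-\lambda(z+u)}\,u^{n-1}\,du\,dz\text{.}
\]
The change of variable $w=z+u$ reduces the inner integral to $\int_0^w(w-z)^{n-1}dz=w^n/n$, and the prefactor becomes $\lambda_1 C_n^1/n=\lambda_1^{n+1}/n!=C_{n+1}^1$, matching the $n+1$ case. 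The main obstacle is the case bookkeeping in the support claim: checking every region of \Cref{table: FIFO mapping} under both $s_{i+1}\in\{1,2\}$ and carefully handling the boundary lines $t^j=-\Delta_d$ and $t^1-t^2=\Delta_d$ that separate regimes. The inductive computation itself is a standard convolution-of-exponentials argument that produces the Gamma-type kernel $z^{n-1}/(n-1)!$ characteristic of sums of independent exponential inter-arrival times.
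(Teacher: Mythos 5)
Your proof is correct and follows essentially the same route as the paper's: both rest on iterating \eqref{eq: FIFO t1>t2}, with the Gamma-type kernel emerging either by unrolling all $n$ steps at once and computing the simplex volume $z^{n-1}/(n-1)!$ (the paper) or by your one-step-at-a-time induction via $\int_0^w (w-z)^{n-1}\,dz = w^n/n$, which is the same computation. Your region-by-region preimage analysis of \Cref{table: FIFO mapping} additionally makes explicit the base-case vanishing on differences in $(0,\Delta_s]\setminus\{\Delta_d\}$ that the paper compresses into ``by definition,'' which is a welcome clarification rather than a change of method.
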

\begin{proof}
Since \eqref{eq: induction 1} and \eqref{eq: induction 2} are symmetric, we will only show the derivation for \eqref{eq: induction 1} for simplicity. By \eqref{eq: FIFO t1>t2},
\begin{eqnarray}
&&p_{\mathbf{T}}(t+n\Delta_s+\gamma,t) \nonumber\\
&=& P_s(1)\int_{0}^\infty p_{\mathbf{T}}(t+(n-1)\Delta_s
+\gamma+x_1, t+x_1)p_xdx_1\nonumber
\end{eqnarray}

By induction on $n$, 
\begin{equation*}
p_{\mathbf{T}}(t+\Gamma,t) = P_s(1)^{n}\int_{\vec{x}\geq\vec{0}} p_{\mathbf{T}}(t+z+\gamma, t+z)p_{\mathbf{x}}d\mathbf{x}\text{,}
\end{equation*}
where $\mathbf{x} = [x_1,x_2,\ldots, x_{n}]$, $z = \sum_{k=1}^{n} x_k$, and $p_{\mathbf{x}}(\vec{x}) = \lambda^n e^{-\lambda z}$. By change of variable from $\mathbf{x}$ to $[ z ,x_2,\ldots,x_n]$,
\begin{equation*}
p_{\mathbf{T}}(t+\Gamma,t) = \lambda_1^n \int_{0}^\infty V(z,n-1) p_{\mathbf{T}}(t+z+\gamma, t+z)e^{-\lambda z}dz\text{,}
\end{equation*}
where $V(z, n-1) = \frac{1}{(n-1)!}z^{n-1}$ is the volume of an $(n-1)$-dimensional cone with depth $z$.\footnote{$V(z, n-1) = \int_0^{z}\int_0^{z-x_2}\cdots\int_0^{z-x_2-\ldots-x_{n-1}} dx_{n}\cdots dx_{3}dx_2$.} Hence, \eqref{eq: induction 1} is verified.

If $\gamma\in (\Delta_d-\Delta_s,\Delta_d)$, by definition, $p_\mathbf{T}(t+\gamma ,t)=0$. Then $p_\mathbf{T}(t+n\Delta_s+\gamma ,t)=0$ for any $n\in\mathbb{N}$ according to \eqref{eq: induction 1}. Similarly, $p_\mathbf{T}(t, t+n\Delta_s+\gamma)=0$ for any $n\in\mathbb{N}$. Hence, $p_{\mathbf{T}}(t^1,t^2) = 0$ if $\min\{t^1,t^2\}>-\Delta_d$ and $| t^1-t^2| \neq \Delta_d + n\Delta_s$ for any $n\in\mathbb{N}$.
\end{proof}

\Cref{prop: fifo delta_s>0} implies a unique ``zebra'' pattern of the steady state lane delay. This pattern is also observed in EDS shown in \cref{fig: fifo zb}. The exact solution of $p_{\mathbf{T}}$ for $\Delta_s>0$ is left as future work. In the following discussion, we derive the case for $\Delta_s=0$. The assumption that $\Delta_s=0$ is valid when the traffic density is low. \Cref{lem: zero function} is useful in the derivation of the steady state delay. 

\begin{lemma}[Zero Function]\label{lem: zero function}
For any norm-bounded $L_1$ function $f$, if $f(t) = a\int_0^\infty f(t+x)e^{-\lambda x}dx$ for all $t$ and $\lambda\leq a>0$, then $f\equiv 0$.
\end{lemma}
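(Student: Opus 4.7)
The plan is to convert the integral identity into a first-order linear ODE whose only $L_1$ solution is zero. After the substitution $u = t + x$ the hypothesis reads
\[
f(t) = a\,e^{\lambda t}\int_t^\infty f(u)\,e^{-\lambda u}\,du\text{,}
\]
so I would introduce $F(t):=\int_t^\infty f(u)e^{-\lambda u}\,du$ and record the equivalent identity $f(t)=a\,e^{\lambda t} F(t)$. Since $f\in L_1$ and $e^{-\lambda u}$ is bounded on $[t,\infty)$, $F$ is an indefinite integral of an $L_1$ function and therefore absolutely continuous with $F'(t)=-f(t)\,e^{-\lambda t}$ a.e.

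Next I would substitute $f(t)=ae^{\lambda t}F(t)$ into that derivative to obtain the ODE $F'(t)=-aF(t)$. Solving gives $F(t)=F(0)e^{-at}$, and feeding this back into the identity yields
\[
f(t)=aF(0)\,e^{(\lambda-a)t}\text{.}
\]

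The last step uses the $L_1$ hypothesis to force $F(0)=0$. When $\lambda<a$, the exponential $e^{(\lambda-a)t}$ is unbounded as $t\to-\infty$ and so is not in $L_1(\mathbb{R})$ unless $F(0)=0$; when $\lambda=a$, $f$ reduces to the constant $aF(0)$, which is in $L_1(\mathbb{R})$ only if it equals zero. Either way $f\equiv 0$, as claimed.

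The main obstacle is the regularity bookkeeping needed to treat $f$ as an honest pointwise function rather than an equivalence class and to differentiate $F$. This is resolved by the fundamental theorem of calculus for Lebesgue integration (giving absolute continuity of $F$) combined with the identity $f(t)=ae^{\lambda t}F(t)$, which upgrades $f$ to a continuous representative and makes the ODE step rigorous. As an alternative route avoiding any regularity question, one can take the Fourier transform of the equation to obtain $\hat f(\xi)(\lambda-i\xi-a)=0$; since the factor $\lambda-i\xi-a$ has no real zero when $a>\lambda$ and only a simple zero at $\xi=0$ when $a=\lambda$, continuity of $\hat f$ for $f\in L_1$ gives $\hat f\equiv 0$, hence $f=0$ a.e.
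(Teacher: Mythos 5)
Your core argument is the same as the paper's: rewrite the hypothesis as $e^{-\lambda t}f(t) = a\int_t^\infty f(u)e^{-\lambda u}\,du$, differentiate to obtain $f'=(\lambda-a)f$, solve to get $f(t)=Ce^{(\lambda-a)t}$, and invoke integrability to force $C=0$. Your regularity bookkeeping (absolute continuity of $F$, upgrading $f$ to a continuous representative) is a genuine improvement over the paper, which differentiates without comment.

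Where you diverge is the final step, and the divergence exposes an inconsistency in the lemma statement itself. You read the hypothesis literally as $\lambda\le a$ and kill the constant by unboundedness of $e^{(\lambda-a)t}$ as $t\to-\infty$ (or by $f$ being a nonzero constant when $\lambda=a$); the paper instead asserts $\lambda-a\ge 0$ and kills the constant by growth as $t\to+\infty$. The paper's version is the operative one: in every application (e.g.\ \eqref{eq: steady state off diagonal}, where $a=\lambda_1$ and the exponential rate is $\lambda=\lambda_1+\lambda_2$) one has $a\le\lambda$, and the function lives on a half-line $[c,\infty)$, so the only available non-integrability is at $+\infty$, which requires $\lambda-a\ge0$. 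Your argument at $-\infty$, and likewise your Fourier-transform alternative (which needs $f\in L_1(\mathbb{R})$ with $\hat f$ defined on all of $\mathbb{R}$), would not carry over to these half-line applications: if $\lambda<a$ then $f(t)=Ce^{(\lambda-a)t}$ on $[c,\infty)$ is a nonzero $L_1$ solution of the integral equation (one checks $a\int_0^\infty e^{(\lambda-a)x}e^{-\lambda x}dx=1$), so the conclusion actually fails there. In short, your proof is valid for the statement as literally written on $\mathbb{R}$, but the inequality in the statement is evidently a typo for $\lambda\ge a>0$, and under that reading the last step must be run the paper's way, using blow-up at $+\infty$ on the half-line.
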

\begin{proof}
Multiply $e^{-\lambda t}$ on both sides, then
\begin{equation*}
e^{-\lambda t}f(t) = a\int_0^\infty f(t+x)e^{-\lambda (x+t)}dx = a\int_t^\infty f(x)e^{-\lambda x}dx\text{.}
\end{equation*}
Take derivative with respect to $t$ on both sides, then
\begin{equation*}
e^{-\lambda t}f'(t) -\lambda e^{-\lambda t}f(t) = -a f(t)e^{-\lambda t}\text{,}
\end{equation*}
which implies that $f'(t) = (\lambda-a) f(t)$ and $f(t) = C e^{(\lambda-a)t}$ for some constant $C$. However, since $\lambda-a\geq 0$, $f$ cannot be norm bounded if $C\neq 0$. Hence, $f\equiv 0$.
\end{proof}

In the following discussion, we derive the steady state distribution of delay for $\Delta_s = 0$. \Cref{prop: fifo delta_s=0} shows that when $\Delta_s=0$, the probability density is non trivial only at $p_{\mathbf{T}}(t, t-\Delta_d)$ or $p_{\mathbf{T}}(t-\Delta_d,t)$ for $t\geq 0$. Hence, we define 
\begin{equation}
g_1 (t) :=p_{\mathbf{T}}(t, t-\Delta_d), g_2 (t) :=p_{\mathbf{T}}(t-\Delta_d,t)\text{.}
\end{equation}
The function $g_i$ for $i\in\{1,2\}$ contains both finite component and delta component, denoted $\tilde g_i$ and $\widehat g_i$ respectively such that $g_i(t) = \tilde g_i (t) + \widehat g_i(t)\delta(t)$. Moreover, for $i\in\{1,2\}$, define the probability function $G_i$, value $\mathcal{M}_i$ and value $\mathcal{I}_i$ as
\begin{subequations}\label{eq: defn G M I}
\begin{align}
G_i(t) &:= \int_0^t g_i(\tau)d\tau\text{,}\\
\mathcal{M}_i &:= \int_0^{\infty} g_i(x)dx\text{,}\\
\mathcal{I}_i &:= \int_0^{\infty} g_i(x)e^{-\lambda x}dx\text{.}\label{eq: defn I}
\end{align}
\end{subequations}
Value $\mathcal{M}_i$ is the probability that lane $i$ has larger delay. 

\begin{prop}[Steady State Distribution for $\Delta_s=0$ under FIFO]\label{prop: fifo delta_s=0}
When $\Delta_s=0$, $p_{\mathbf{T}}(t^1,t^2) = 0$ if $| t^1-t^2| \neq \Delta_d$. For $i \in\{1,2\}$, the following equations hold 
\begin{eqnarray}
G_i(t) &=& \widehat g_i(0)e^{\lambda_{i^*}t}\text{, for } t\in[0,\Delta_d]\text{,}\label{eq: distribution FIFO}\\
\widehat g_i(t) &=& \left\{\begin{array}{ll}\frac{\lambda_i}{\lambda}\left[\mathcal{I}_i + e^{-\lambda\Delta_d}\mathcal{I}_{i^*}\right] & t=0\\
0 & t\neq 0\end{array}\right.\text{,}\label{eq: delta component FIFO}\\
\lambda_i\widehat g_{i^*}(0) &=& \tilde g_i(\Delta_d^-)-\tilde g_i(\Delta_d^+) \text{,}\label{eq: discontinuity FIFO}\\
\mathcal{M}_i &=& \frac{\lambda_i}{\lambda}\text{.}\label{eq: integration FIFO}
\end{eqnarray}
Moreover, when $t$ is sufficiently large,
\begin{equation}
\mathcal{M}_i-G_i(t)\propto e^{-at}\text{,}\label{eq: approx FIFO}
\end{equation}
where $a<0$ is the solution of the following equation
\begin{equation}
(a-\lambda_1)(a-\lambda_2) - \lambda_1 e^{-a\Delta_d}\lambda_2 e^{-a\Delta_d} = 0 \text{.}\label{eq: characteristic}
\end{equation}
\end{prop}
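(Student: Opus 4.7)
My plan is to exploit \Cref{prop: fifo delta_s>0} with $\Delta_s=0$ to reduce the 2D density to two 1D functions $g_1,g_2$ on the half-lines $L_i$ where lane $i$ is the leader, then to apply the explicit mapping in \Cref{table: FIFO mapping} to turn the fixed-point condition $p_{\mathbf{T}}=\mathcal{M}(p_{\mathbf{T}})$ into integral equations for $g_1,g_2$, and finally to convert those integral equations into a linear ODE on $(0,\Delta_d)$ and a characteristic equation in the tail. The zero-support claim is immediate from \Cref{prop: fifo delta_s>0}, since the set $\{\Delta_d+n\Delta_s:n\in\mathbb{N}\}$ collapses to $\{\Delta_d\}$ when $\Delta_s=0$. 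A case-by-case check of \Cref{table: FIFO mapping} restricted to initial points on $L_1\cup L_2$ shows that Region~2 is empty (its two inequalities are incompatible on both $L_1$ and $L_2$), Region~1 collapses mass onto the single point $(0,-\Delta_d)$ or $(-\Delta_d,0)$, and Regions~3 and 4 move mass smoothly along $L_1$ or $L_2$; in particular, whenever $s_{i+1}=i$ the post-event state lies on $L_i$. The steady-state probability that lane $i$ leads is therefore $P_s(i)=\lambda_i/\lambda$, which is \eqref{eq: integration FIFO}.

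Collecting the Region~3 and Region~4 images under the mapping (taking $s_{i+1}=1$) yields, for $t>0$,
\begin{equation*}
\tilde g_1(t)=\lambda_1\int_0^\infty g_1(t+u)e^{-\lambda u}du+\lambda_1 e^{-\lambda\Delta_d}\int_{\max(-t,-\Delta_d)}^\infty g_2(t+u)e^{-\lambda u}du,
\end{equation*}
while the Region~1 images accumulate at $t=0$ into a Dirac mass of weight $\frac{\lambda_1}{\lambda}\bigl(\mathcal{I}_1+e^{-\lambda\Delta_d}\mathcal{I}_2\bigr)$, which is \eqref{eq: delta component FIFO}. On $(0,\Delta_d)$ the change of variable $v=t+u$ shows the second term collapses to $\lambda_1 e^{\lambda(t-\Delta_d)}\mathcal{I}_2$, so dividing by $e^{\lambda t}$ and differentiating once turns the equation into the linear ODE $\tilde g_1'(t)=\lambda_2\tilde g_1(t)$; hence $\tilde g_1(t)=Ce^{\lambda_2 t}$. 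Evaluating the integral equation at $t=0^+$ and substituting \eqref{eq: delta component FIFO} pins $C=\lambda_2\widehat g_1(0)$, and integrating with the delta included reproduces \eqref{eq: distribution FIFO}. The jump \eqref{eq: discontinuity FIFO} is read off the same equation: as $t$ crosses $\Delta_d$ from below, the $\widehat g_2(0)\delta$ mass drops out of the second integral, producing a deficit of exactly $\lambda_1\widehat g_2(0)$. The case $i=2$ is symmetric.

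For the tail \eqref{eq: approx FIFO}, once $t>\Delta_d$ both integrals become pure convolutions, so inserting the ansatz $g_i(t)\sim A_ie^{at}$ yields two linear relations in $(A_1,A_2)$ whose nontrivial solvability is precisely \eqref{eq: characteristic}. Setting $\phi(a):=(a-\lambda_1)(a-\lambda_2)-\lambda_1\lambda_2 e^{-2a\Delta_d}$, one has $\phi(0)=0$ and $\phi'(0)=2\Delta_d\lambda_1\lambda_2-\lambda$; the convergence condition \eqref{eq: convergence FIFO} specialized to $\Delta_s=0$ is exactly $\phi'(0)<0$, so $\phi$ is positive just to the left of $0$, and since $\phi(a)\to-\infty$ as $a\to-\infty$ (the exponential dominates the polynomial), the intermediate value theorem yields a negative root. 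Integrating $g_i(t)\sim A_ie^{at}$ over $[t,\infty)$ then gives \eqref{eq: approx FIFO}.

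The main obstacle is upgrading the asymptotic ansatz to a genuine leading-order statement, since a priori the tail of $g_i$ could contain other components. For this I would apply \Cref{lem: zero function} (or a short Laplace-transform argument) to the residual $f_i(t):=g_i(t)-A_ie^{at}$: a change of variables recasts the tail equation as the homogeneous form covered by the lemma, forcing $f_i$ to decay faster than $e^{at}$. Bookkeeping between the delta $\widehat g_i(0)\delta$ and the smooth component $\tilde g_i$ near $t=0$ and across the discontinuity at $t=\Delta_d$ is tedious but mechanical once the support reduction to $L_1\cup L_2$ is in place.
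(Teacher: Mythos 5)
Your proposal is correct and follows essentially the same route as the paper: reduce the 2D density to the two one-dimensional functions $g_1,g_2$ on the lines $|t^1-t^2|=\Delta_d$, turn the fixed-point condition into renewal-type integral equations, extract the point mass at $0$, convert to an ODE on $(0,\Delta_d)$ by multiplying by $e^{-\lambda t}$ and differentiating, read the jump at $\Delta_d$ off the delta term, and obtain \eqref{eq: characteristic} from the tail convolution (the paper phrases this last step as the characteristic equation of the delay differential equation $\tilde g_i'=\lambda_{i^*}\tilde g_i-\lambda_i\tilde g_{i^*}(t-\Delta_d)$, which is equivalent to your exponential ansatz). One caveat: the support claim is not "immediate from \Cref{prop: fifo delta_s>0}," since that proposition assumes $\Delta_s>0$ (its decomposition $\Gamma=n\Delta_s+\gamma$ degenerates at $\Delta_s=0$) and in any case does not cover the boundary strip $\min\{t^1,t^2\}=-\Delta_d$; the paper instead proves the claim directly by applying \Cref{lem: zero function} to the off-diagonal and boundary integral equations, which is exactly the tool you already invoke for the tail, so the fix is immediate. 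Your intermediate-value argument for the existence of the negative root of \eqref{eq: characteristic} under \eqref{eq: convergence FIFO}, and your observation that the post-event state always lies on $L_{s_{i+1}}$ (giving $\mathcal{M}_i=\lambda_i/\lambda$ cleanly), are details the paper asserts without justification, so these are welcome additions rather than deviations.
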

\begin{proof}
We first show that $p_{\mathbf{T}}(t^1,t^2) = 0$ if $| t^1-t^2| \neq \Delta_d$. There are two cases: $\min\{t^1,t^2\}>-\Delta_d$ or $\min\{t^1,t^2\}=-\Delta_d$. 
Consider $\Gamma>\Delta_d$ and $t>-\Delta_d$. According to \eqref{eq: FIFO t1>t2},
\begin{equation}
p_{\mathbf{T}}(t+\Gamma,t) = \lambda_1 \int_0^\infty p_{\mathbf{T}}(t+x+\Gamma,t+x) e^{-\lambda x}dx\text{.}\label{eq: steady state off diagonal}
\end{equation}
By \Cref{lem: zero function}, \eqref{eq: steady state off diagonal} implies $p_{\mathbf{T}}(t+\Gamma,t)\equiv0$ for all $t>-\Delta_d$. Similarly, $p_{\mathbf{T}}(t,t+\Gamma)\equiv 0$ for all $t>-\Delta_d$. 
Moreover, for $t>0$, according to \eqref{eq: fifo 0},
\begin{equation}
p_{\mathbf{T}}(t,-\Delta_d) = \lambda_1 \int_{0}^\infty p_{\mathbf{T}}(t+x,-\Delta_d) e^{-\lambda x} dx\text{.}\label{eq: steady state boundary}
\end{equation}
By \Cref{lem: zero function}, \eqref{eq: steady state boundary} implies $p_{\mathbf{T}}(t,-\Delta_d)\equiv0$ for $t>0$. Similarly, $p_{\mathbf{T}}(-\Delta_d,t)\equiv 0$ for $t>0$. Hence, the claim is verified.

Now we compute the steady state distribution $g_i$. 
In either $g_1$ or $g_2$, there is only one point mass at $0$ by \eqref{eq: fifo 0} to \eqref{eq: FIFO t1>t2}. According to \eqref{eq: fifo 0},
\begin{eqnarray}
&&\widehat g_i(0) \nonumber\\
&=& \lambda_i \int_0^\infty \left[\int_0^x g_i(t)dt+\int_0^{x-\Delta_d} g_{i^*}(t)dt\right]e^{-\lambda x}dx\nonumber\\
&=& \lambda_i \int_0^\infty \left[\int_t^\infty e^{-\lambda x}dx g_i+ \int_0^\infty \int_{t+\Delta_d}^\infty e^{-\lambda x}dx g_{i^*}\right]dt \nonumber\\
&=& \frac{\lambda_i}{\lambda}\left[\int_0^\infty g_i(t)e^{-\lambda t}dt + \int_0^\infty g_{i^*}(t)e^{-\lambda (t+\Delta_d)}dt\right]\text{,}\label{eq: point mass FIFO delta_s=0}
\end{eqnarray}
where the second equality is obtained by changing the order of integration. By definition \eqref{eq: defn I}, \eqref{eq: point mass FIFO delta_s=0} implies \eqref{eq: delta component FIFO}.

According to \eqref{eq: fifo diagonal} and \eqref{eq: FIFO t1>t2}, for $t>0$,
\begin{equation}
\tilde g_i (t) = \lambda_i \int_{0}^\infty \left[g_i(t+x)+g_{i^*} (t+x-\Delta_d)\right]e^{-\lambda x}dx\text{,}\label{eq: g1 formula}
\end{equation}
which implies that $\tilde g_i$ is continuous except at $\Delta_d$. The discontinuity at $\Delta_d$ is caused by the point mass $\widehat g_{i^*}(0)$. By \eqref{eq: g1 formula}, the claim in \eqref{eq: discontinuity FIFO} is verified. 
By multiplying $e^{-\lambda t}$ on both sides of \eqref{eq: g1 formula} and then taking derivatives similar to the proof in  \Cref{lem: zero function}, we obtain the following differential equation
\begin{eqnarray}
\tilde g_i'(t) = \lambda_{i^*} \tilde g_i(t)-\lambda_i\tilde g_{i^*}(t-\Delta_d)\text{.}\label{eq: FIFO pde 1}
\end{eqnarray}
For $t\in(0,\Delta_d)$, since $\tilde g_{i^*}(t-\Delta_d)=0$, \eqref{eq: FIFO pde 1} implies that there exists $c_{i}\in\mathbb{R}^+$ such that
\begin{equation}
\tilde g_i(t) = c_{i}e^{\lambda_{i^*} t}\text{.}\label{eq: g1 solution FIFO}
\end{equation}
Plugging \eqref{eq: g1 solution FIFO} back to \eqref{eq: g1 formula}, the constant $c_i$ can be computed,
\begin{eqnarray}
c_{i} = \lambda_i\left[\mathcal{I}_i - \widehat g_i(0) + e^{-\lambda \Delta_d}\mathcal{I}_{i^*} \right]  = \lambda_{i^*}\widehat g_i(0)\text{.}\label{eq: c FIFO}
\end{eqnarray}
Then \eqref{eq: distribution FIFO} is verified by integrating \eqref{eq: g1 solution FIFO}. Moreover, it is easy to verify that $\mathcal{M}_i = P_s(i) = \frac{\lambda_i}{\lambda}$. Hence, \eqref{eq: distribution FIFO} to \eqref{eq: integration FIFO} are all verified.

The characteristic equation \cite{dde} of the delay differential equation \eqref{eq: FIFO pde 1} for $i\in\{1,2\}$ satisfies
\begin{equation}
\det \left(aI_2 -  \left[\begin{array}{cc}
\lambda_2 & 0\\
0 & \lambda_1\end{array}\right] + e^{-a\Delta_d}\left[\begin{array}{cc}
0 & \lambda_1\\
\lambda_2 & 0\end{array}\right] \right) = 0 \text{,}
\end{equation}
which is equivalent to the nonlinear eigenproblem \eqref{eq: characteristic}.
There are three possible solutions with $a=0$, $a>0$, and $a<0$, respectively.  Since $\lim_{t\rightarrow\infty}g_i(t)=0$, we can only take the solution $a<0$. When $t\rightarrow\infty$, $g_i(t)$ is proportional to $e^{at}$. Then \eqref{eq: approx FIFO} is verified.

\end{proof}

To compute the exact solution of the distribution, the delay differential equation (DDE) \eqref{eq: FIFO pde 1} needs to be solved. To solve the DDE, we need to compute the expression of $G_i(t)$ for $t\in((n-1)\Delta_d,n\Delta_d]$ consecutively for all $n$ considering the boundary constraints \eqref{eq: distribution FIFO} to \eqref{eq: integration FIFO}. However, as there are infinitely many segments, the complexity of the problem grows quickly. In this paper, we approximate the distribution for $t>\Delta_d$ using \eqref{eq: approx FIFO}.  By incorporating  \eqref{eq: distribution FIFO} and \eqref{eq: integration FIFO}, the approximated distribution is
\begin{equation}
G_i(t) = \left\{\begin{array}{ll}
\widehat g_i(0)e^{\lambda_{i^*}t} &  t\leq \Delta_d\\
\frac{\lambda_i}{\lambda}(1-e^{a(t-\Delta_d)})+G_i(\Delta_d)e^{a(t-\Delta_d)} & t>\Delta_d
\end{array}\right.\text{.}\label{eq: approx distribution FIFO}
\end{equation}
There is only one unknown parameter $\widehat g_i(0)$, which can be solved by the remaining equations in \Cref{prop: fifo delta_s=0}. However, the approximated distribution \eqref{eq: approx distribution FIFO} is not simultaneously compatible with \eqref{eq: delta component FIFO} and \eqref{eq: discontinuity FIFO}. We need to relax either condition. Equation \eqref{eq: delta component FIFO} is a global condition as it is related to the integral of the distribution. Equation \eqref{eq: discontinuity FIFO} is a local condition as it concerns the discontinuous point of the probability density. 

\begin{remark}[Approximation 1]
In the first approximation, the local condition \eqref{eq: discontinuity FIFO} is relaxed. Then $\widehat g_i(0)$ is obtained by solving \eqref{eq: delta component FIFO} and \eqref{eq: approx distribution FIFO}, 
\begin{equation}
\widehat g_i(0) = \frac{a  \lambda_{i}  y  \left((\lambda_i-a){\lambda_{i}}  (y^2-1)    + (a-\lambda)    y_i \left[\lambda_{i^*} +  \lambda_{i}   y \right]\right)}{B_i}\text{,} \label{eq: FIFO approx 1}
\end{equation}
where $y:=e^{-\lambda\Delta_d}$, $y_i:=e^{-\lambda_i\Delta_d}$, and
\begin{equation}
\begin{array}{rl}
B_i &= {\lambda}^2  (  a^2  y (y-y_i) (1-y_i) + a(a-\lambda)  y_i   \\&  +(a-\lambda_i)  \lambda  y^2  (y_i-1) + (2  a - \lambda)  \lambda  y  y_i (1-y_i)   \\
&+ (a-\lambda)  \lambda_{i}  y  y_i^2     + \lambda_i  \lambda_{i^{*}}  y_i + {\lambda_{i}}^2  y^2  y_i - a  \lambda_{i}  y^2)\text{.}
\end{array}
\end{equation}
\end{remark}

\begin{remark}[Approximation 2]
In the second approximation, we relax the global condition \eqref{eq: delta component FIFO}. Then $\widehat g_i(0)$ is obtained by solving \eqref{eq: discontinuity FIFO} and \eqref{eq: approx distribution FIFO}, 
\begin{eqnarray}
\widehat{g}_i(0) = \frac{a \lambda_{i} y e^{a \Delta_{d}}  \left(\lambda_{i} + \lambda_{i^*} y_i - a e^{a\Delta_{d}}\right)}{\lambda  \left(a  \lambda  e^{a  \Delta_{d}} - a^2 e^{2 a \Delta_{d}}  + \lambda_{i} \lambda_{i^*} (y-1)\right)}\text{.}\label{eq: FIFO approx 2}
\end{eqnarray}
\end{remark}

The accuracy of the two approximations against the steady state distribution obtained from EDS with $10000$ particles is shown in \cref{fig: approx FIFO}. Though both underestimate the delay, \eqref{eq: FIFO approx 1} provides a better approximation because it preserves the global property. In the following discussion and analysis, we use the first approximation.

\begin{cor}[Approximated Steady State Vehicle Delay]\label{cor: scalar delay FIFO}
When $\Delta_s=0$, under the approximation \eqref{eq: approx distribution FIFO}, the steady state vehicle delay has the distribution
\begin{equation}
P_d(t) = \left\{\begin{array}{ll}
\widehat g_1(0)e^{\lambda_2t} + \widehat g_2(0)e^{\lambda_1t} & t\leq \Delta_d\\
1-e^{a(t-\Delta_d)}+P_d(\Delta_d)e^{a(t-\Delta_d)} & t>\Delta_d
\end{array}
\right.\text{,}\label{eq: scalar delay FIFO}
\end{equation}
with expected delay
\begin{equation}
E(d) = \widehat g_1(0)\mathcal{E}(\lambda_2)+\widehat g_2(0)\mathcal{E}(\lambda_1) - \frac{\left(a  \Delta_{d} - 1\right)  \left(P_d(\Delta_d) - 1\right)}{a}  \text{,}\label{eq: expected delay FIFO}
\end{equation}
where
\begin{equation}
\mathcal{E}(\lambda_i) = \frac{1+e^{\Delta_{d}  \lambda_{i}}  \left(\Delta_{d}  \lambda_{i} - 1\right)}{{\lambda_{i}}}\text{.}\label{eq: integral 0 delta_d}
\end{equation}

\end{cor}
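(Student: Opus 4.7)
The plan is to identify the vehicle delay $d$ with a simple functional of the lane-delay vector $(T^1,T^2)$, push the steady-state joint distribution from \Cref{prop: fifo delta_s=0} through that functional, and close by two elementary integrations.

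First I would establish the key identity $d_{i+1}=\max\{T^1_{i+1},T^2_{i+1}\}$ in the two-lane FIFO setting. Because every pair of vehicles in a two-lane intersection is conflicting, FIFO enforces $\bar t_j\leq\bar t_{i+1}$ for all $j\leq i+1$, so the global maximum of the passing times is attained by the newest vehicle; consequently the lane that receives vehicle $i+1$ carries the larger lane delay. For the other lane, the set of past passing times is unchanged, hence $T^k_{i+1}-T^k_i+x_i=0$, and \eqref{eq: scalar delay and traffic delay} collapses to $d_{i+1}=T^{s_{i+1}}_{i+1}$, which by the preceding observation equals $\max\{T^1_{i+1},T^2_{i+1}\}$.

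Next I would compute $P_d(t)=P(T^1\leq t,T^2\leq t)$ using the support description of \Cref{prop: fifo delta_s=0}. In steady state $p_{\mathbf{T}}$ lives on the two half-lines $\{(\tau,\tau-\Delta_d):\tau\geq 0\}$ and $\{(\tau-\Delta_d,\tau):\tau\geq 0\}$, together with point masses $\widehat g_1(0)$ and $\widehat g_2(0)$ at their left endpoints. The box $\{T^1\leq t\}\cap\{T^2\leq t\}$ intersects each half-line in an initial segment of length $t$, so the contributions add up to $P_d(t)=G_1(t)+G_2(t)$. Substituting the two-piece approximation \eqref{eq: approx distribution FIFO} into this sum, and using $\mathcal{M}_1+\mathcal{M}_2=1$ from \eqref{eq: integration FIFO} so that the coefficient $(\lambda_1+\lambda_2)/\lambda$ on the tail piece collapses to $1$, yields \eqref{eq: scalar delay FIFO} directly.

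Finally I would obtain $E(d)$ by integrating $t\,p_d(t)$ piecewise. On $(0,\Delta_d)$ the density is $\widehat g_1(0)\lambda_2 e^{\lambda_2 t}+\widehat g_2(0)\lambda_1 e^{\lambda_1 t}$, and one integration by parts gives $\int_0^{\Delta_d}t\lambda_i e^{\lambda_i t}\,dt=\mathcal{E}(\lambda_i)$ as in \eqref{eq: integral 0 delta_d}. On $(\Delta_d,\infty)$ the density is $a(P_d(\Delta_d)-1)e^{a(t-\Delta_d)}$ with $a<0$; after the shift $u=t-\Delta_d$ the two standard integrals $\int_0^\infty e^{au}du=-1/a$ and $\int_0^\infty u e^{au}du=1/a^2$ assemble into $-(a\Delta_d-1)(P_d(\Delta_d)-1)/a$. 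The point mass of $P_d$ at $t=0$ contributes nothing to the mean, so summing the two pieces gives \eqref{eq: expected delay FIFO}. The only real obstacle is isolating the clean identity $d=\max\{T^1,T^2\}$: it depends on the two-lane all-pairs-conflicting structure together with FIFO order-preservation. Once that is in hand the remainder is bookkeeping on a one-dimensional support plus two elementary integrals.
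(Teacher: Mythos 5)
Your proposal is correct and follows essentially the same route as the paper: the paper's proof simply asserts $P_d(t)=G_1(t)+G_2(t)$ from \eqref{eq: scalar delay and traffic delay} and then integrates, and your identification $d_{i+1}=T^{s_{i+1}}_{i+1}=\max\{T^1_{i+1},T^2_{i+1}\}$ (using that the off-lane terms vanish under FIFO) together with the one-dimensional support from \Cref{prop: fifo delta_s=0} is exactly the justification that assertion tacitly relies on. The piecewise integrations for $E(d)$ match the paper's computation as well.
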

\begin{proof}
By \eqref{eq: scalar delay and traffic delay}, the vehicle delay in the steady state satisfies that $P_d(t) = G_1(t)+G_2(t)$. So \eqref{eq: scalar delay FIFO} follows from \eqref{eq: approx distribution FIFO}. The expected delay satisfies $E(d) = \int_0^\infty tdP_d(t)$. Let $\mathcal{E}(\lambda_i):=\int_0^{\Delta_d} t de^{\lambda_it}$. Then \eqref{eq: expected delay FIFO} and \eqref{eq: integral 0 delta_d} follow.
\end{proof}

\begin{figure}[t]
\input{FIFO-Uneven.tex}
\caption{Steady state traffic delay under FIFO. $\lambda_1=\SI[mode=text]{0.3}{\per\second}$, $\lambda_2=\SI[mode=text]{0.5}{\per\second}$, $\Delta_d = \SI[mode=text]{2}{\second}$, and $\Delta_s =\SI[mode=text]{0}{\second}$.}
\label{fig: approx FIFO}
\vspace{-10pt}
\end{figure}

 

\subsection{Case 2: Delay under FO}
Following from \eqref{eq: defn T} and \eqref{eq: fo micro}, the dynamic equation \eqref{eq: dynamic} for FO can be computed, which is listed in \Cref{table: FO mapping} and illustrated in \Cref{fig: FO mapping} for $s_{i+1}=1$. There are eight smooth components in the mapping. Regions 1 to 4 are the same as in the FIFO case. Vehicle $i+1$ is the last one to pass the intersection. Regions 5 to 8 correspond to the case that vehicle $i+1$ passes the intersection before the last vehicle in the other lane. In regions 5 and 7, vehicle $i+1$ arrives earlier than the last vehicle in the other lane and there is enough gap in the ego lane. Hence, vehicle $i+1$ passes without delay, but the last vehicle in the other lane yields (with delay in region 5, without delay in region 7). Regions 6 and 8 correspond to the case that vehicle $i+1$ is delayed by the last vehicle in the ego lane but can still go before the last vehicle in the other lane. Delay is caused in the other lane in region 6.

Given the dynamic equation, the probability \eqref{eq: probability} can be computed. 
The distribution obtained from EDS with the same condition as in the FIFO case is shown in \Cref{fig: fo zb}. FO generates smaller delay as compared to FIFO, but FO no longer has the ``zebra'' pattern shown in FIFO. 

In the following discussion, we discuss the necessary condition for convergence under FO and derive the exact steady state distribution of delay for $\Delta_s=0$. The distribution for $\Delta_s>0$ is left as future work. 
Recall the definitions $y= e^{-\lambda\Delta_d}$ and $y_i = e^{-\lambda_i\Delta_d}$ for $i\in\{1,2\}$.

\begin{prop}[Necessary Condition for Convergence under FO]\label{prop: convergence FO}
The distributions $\{p_{\mathbf{T}_i}\}_i$ converges for FO only if the following condition holds
\begin{equation}
\lambda_1\lambda_2(y_1+y_2)\Delta_d+[\lambda_1^2+\lambda_2^2+\lambda_1\lambda_2(2-y_1-y_2)]\Delta_s \leq \lambda\text{.}\label{eq: convergence FO}
\end{equation}
\end{prop}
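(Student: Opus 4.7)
The plan is to mirror the FIFO necessary-condition argument in \Cref{prop: convergence FIFO}: if $\{p_{\mathbf{T}_i}\}_i$ converges, the expected delay is finite, so asymptotically the total passing-time span for $N$ arrivals equals the arrival span $(N-1)/\lambda$. Assigning to each arrival a minimum ``required gap'' relative to its predecessor in the passing sequence and summing then forces the long-run average required gap to satisfy $E[\text{required gap}]\le 1/\lambda$. The remaining task is to compute $E[\text{required gap}]$ under FO.

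Under FO a new arrival $i+1$ with lane $s_{i+1}=k$ is grouped with its previous lane-$k$ vehicle (contributing only $\Delta_s$) whenever that previous lane-$k$ arrival lies within $\Delta_d$ of $t_{i+1}$; otherwise vehicle $i+1$ must wait $\Delta_d$ after an opposite-lane passing. By the memoryless property of the lane-$k$ Poisson subprocess, the time since the previous lane-$k$ arrival at a new lane-$k$ arrival is $\mathrm{Exp}(\lambda_k)$, so the merge event has probability $1-y_k$ and the no-merge event has probability $y_k=e^{-\lambda_k\Delta_d}$. Since the lane labels of consecutive arrivals are independent with $P_s(k)=\lambda_k/\lambda$, the pair $(s_i,s_{i+1})=(k^*,k)$ occurs with probability $P_s(k^*)P_s(k)$ and contributes a gap of $\Delta_d$ exactly when no merge occurs, i.e., with conditional probability $y_k$. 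Summing over $k\in\{1,2\}$ gives
\begin{equation*}
\Pr(\text{required gap}=\Delta_d) \;=\; P_s(1)P_s(2)(y_1+y_2) \;=\; \frac{\lambda_1\lambda_2(y_1+y_2)}{\lambda^2},
\end{equation*}
with all remaining arrivals (same-lane pairs or different-lane pairs that merge) contributing only $\Delta_s$. Substituting the resulting $E[\text{required gap}]$ into $E[\text{required gap}]\le 1/\lambda$ and multiplying both sides by $\lambda^2$ recovers \eqref{eq: convergence FO}.

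The main obstacle is the rigorous attribution of a well-defined ``required gap'' to each arrival so that summing these lower-bounds the actual total passing span. This will follow from the FO recursion \eqref{eq: fo micro}: each vehicle's passing time exceeds that of the preceding passing by at least $\Delta_s$ (same-lane successor) or $\Delta_d$ (opposite-lane successor), and the two-lane reorder criterion pairs each arrival with a unique such predecessor. Once this identification is made, the Poisson-based merge-probability computation and the final algebraic substitution proceed exactly as in the FIFO proof.
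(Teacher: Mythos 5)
Your proposal follows essentially the same route as the paper's proof: both bound the long-run average departure interval from below by $P_s(1)P_s(2)(y_1+y_2)\Delta_d+\bigl[1-P_s(1)P_s(2)(y_1+y_2)\bigr]\Delta_s$, require it not to exceed the mean arrival interval $1/\lambda$, and rearrange. The only difference is cosmetic --- you make the per-arrival ``required gap'' attribution slightly more explicit --- and you inherit the same implicit independence assumption (between the lane of the preceding arrival and the exponential gap to the preceding same-lane arrival) that the paper's own informal argument uses.
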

\begin{proof}
Similar to the discussion in \Cref{prop: convergence FIFO}, the minimum average departure interval between two consecutive vehicles should be smaller than the average arrival interval. As FO adjusts the passing order, vehicles from the same lane may be grouped and pass the intersection together. For two vehicles consecutively leaving the intersection, they go to different lanes only if the following two conditions holds: 1) they come from different lanes and 2) the last vehicle in the two has a temporal gap greater than $\Delta_d$ with its front vehicle. Hence, the probability that two departure vehicles are from different lanes  is $P_s(1)P_s(2)(e^{-\lambda_1\Delta_d}+e^{-\lambda_2\Delta_d})$, which is smaller than $2P_s(1)P_s(2)$. The minimum average departure interval is $P_s(1)P_s(2)(y_1+y_2)\Delta_d+\left[P_s(1)^2+P_s(2)^2+P_s(1)P_s(2)(2-y_1-y_2)\right]\Delta_s$. The average arrival interval is $\frac{1}{\lambda}$. Condition \eqref{eq: convergence FIFO} can be obtained by requiring the minimum departure interval be smaller than the arrival interval.
\end{proof}



\begin{figure}[t]
\begin{center}
\vspace{-10pt}
\subfloat[Domain]{
\includegraphics[width=4.3cm]{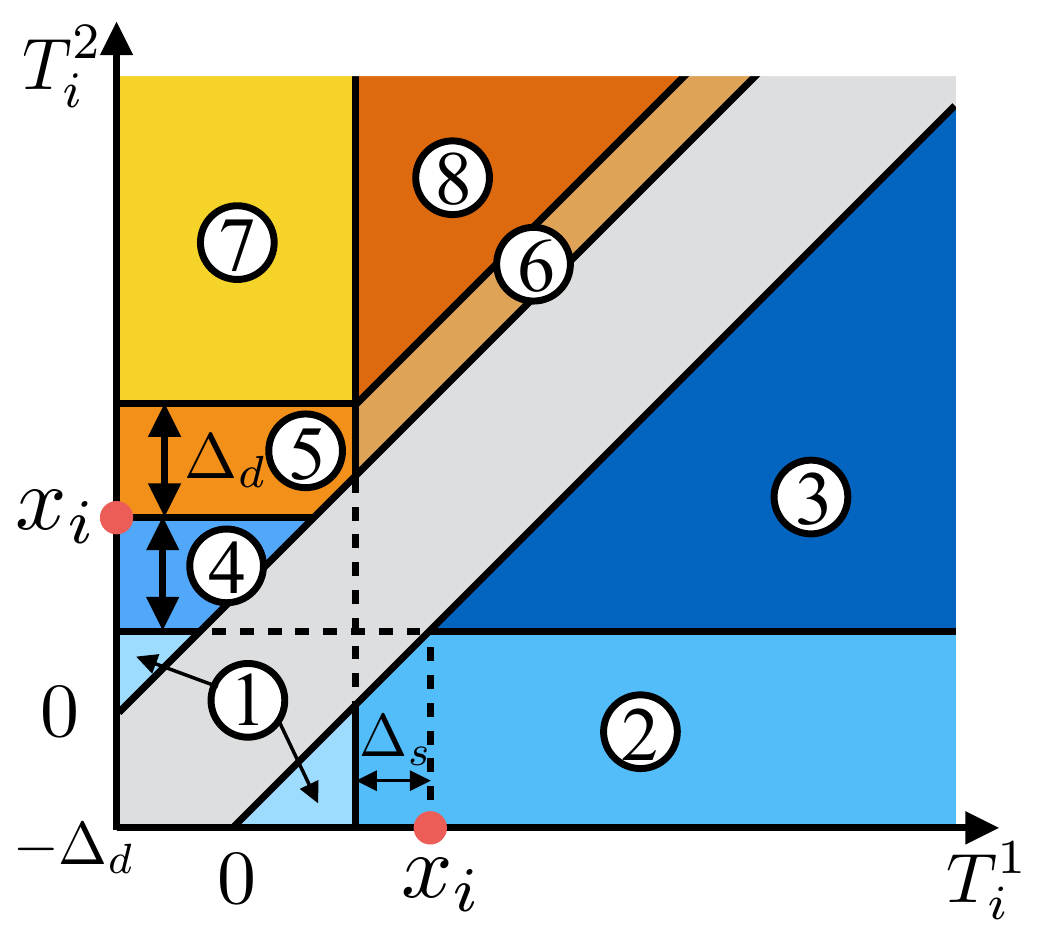}}
\subfloat[Value]{
\includegraphics[width=4.1cm]{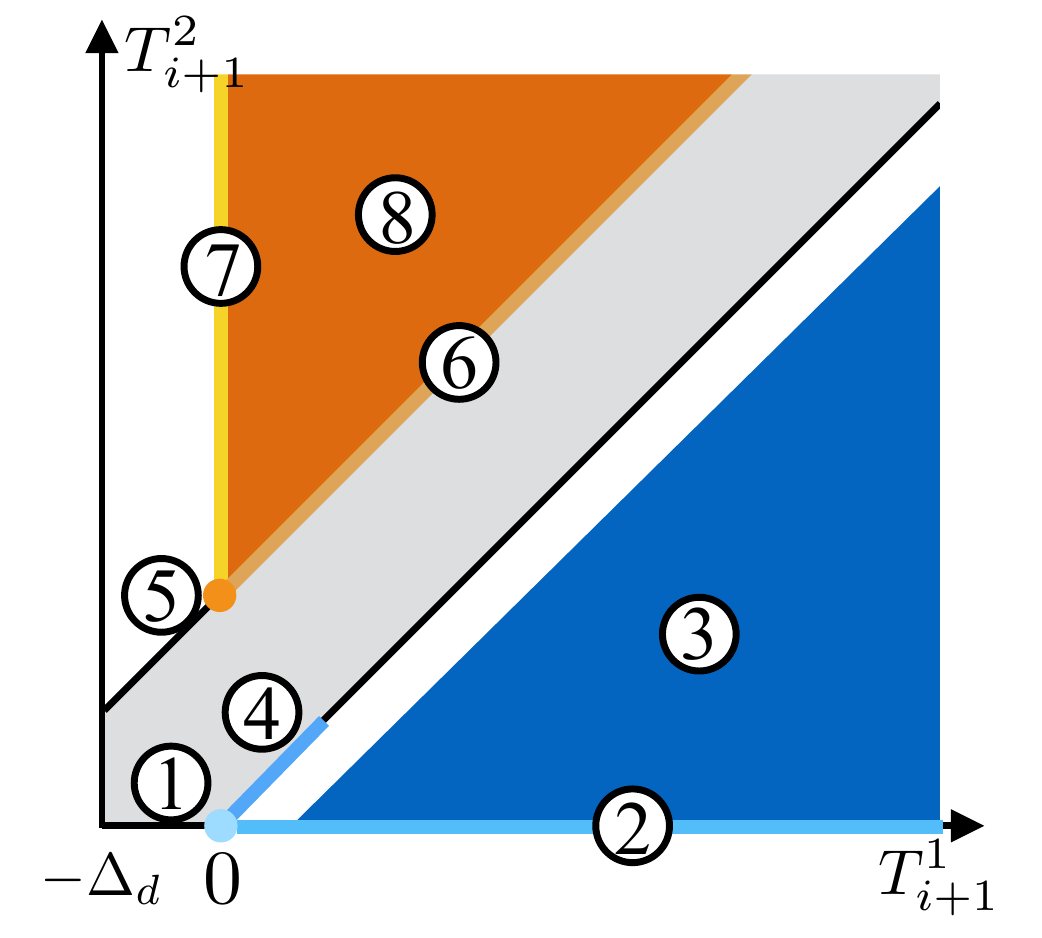}}
\caption{Illustration of the mapping \eqref{eq: dynamic} under FO for $s_{i+1} = 1$.}
\label{fig: FO mapping}
\end{center}
\end{figure}

\begin{table}[t]
\caption{The mapping \eqref{eq: dynamic} under FO for $s_{i+1} = 1$.}
\vspace{-10pt}
\begin{center}
\begin{tabular}{ccc}
\toprule
 & Domain & Value\\
\midrule
1 & $\begin{array}{c}T_i^1<x_i-\Delta_s \\ T_i^2<x_i-\Delta_d\end{array}$ & $\begin{array}{cc} T_{i+1}^1 = 0 \\ T_{i+1}^2 = -\Delta_d\end{array}$\\
\midrule
2 & $\begin{array}{c}T_i^1\geq x_i-\Delta_s \\ T_i^2<x_i-\Delta_d \\ T_i^2<T_i^1\end{array}$ & $\begin{array}{cc} T_{i+1}^1 = T_i^1+\Delta_s-x_i \\ T_{i+1}^2 = -\Delta_d\end{array}$\\
\midrule
3 & $\begin{array}{c}T_i^2\geq x_i-\Delta_d \\ T_i^2<T_i^1\end{array}$ & $\begin{array}{cc} T_{i+1}^1 = T_i^1+\Delta_s-x_i \\ T_{i+1}^2 = T_i^2-x_i\end{array}$\\
\midrule
4 & $\begin{array}{c}T_i^2\in [x_i-\Delta_d, x_i) \\ T_i^2>T_i^1\end{array}$ & $\begin{array}{cc} T_{i+1}^1 = T_i^2+\Delta_d-x_i \\ T_{i+1}^2 = T_i^2-x_i\end{array}$\\
\midrule
5 & $\begin{array}{c}T_i^2\in [x_i, x_i+\Delta_d) \\ T_i^1<x_i-\Delta_s \end{array}$ & $\begin{array}{cc} T_{i+1}^1 = 0 \\ T_{i+1}^2 = \Delta_d \end{array}$\\
\midrule
6 & $\tiny\begin{array}{c}T_i^2-T_i^1\in [\Delta_d,\Delta_d+\Delta_s] \\ T_i^1\geq x_i-\Delta_s \end{array}$ & $\tiny\begin{array}{cc} T_{i+1}^1 = T_i^1-x_i+\Delta_s \\ T_{i+1}^2 = T_i^1-x_i+\Delta_s+\Delta_d \end{array}$\\
\midrule
7 & $\begin{array}{c}T_i^1<x_i-\Delta_s \\ T_i^2\geq x_i+\Delta_d \end{array}$ & $\begin{array}{cc} T_{i+1}^1 = 0 \\ T_{i+1}^2 = T_i^2-x_i \end{array}$\\
\midrule
8 & $\begin{array}{c}T_i^2-T_i^1 > x_i+\Delta_d+\Delta_s \\ T_i^1\geq x_i-\Delta_s \end{array}$ & $\begin{array}{cc} T_{i+1}^1 = T_i^1-x_i+\Delta_s \\ T_{i+1}^2 = T_i^2-x_i \end{array}$\\
\bottomrule
\end{tabular}
\end{center}
\label{table: FO mapping}
\end{table}%

\begin{prop}[Steady State Distribution for $\Delta_s=0$ under FO]\label{prop: fo delta_s=0}
If $\Delta_s = 0$, then $p_{\mathbf{T}}(t^1,t^2)=0$ if $|t^1-t^2|\neq -\Delta_d$ or $t^1+t^2>\Delta_d$. Moreover, for $i\in\{1,2\}$,
\begin{eqnarray}
G_i(t) &=& \left\{\begin{array}{ll} \frac{c_i}{\lambda_{i^*}} e^{\lambda_{i^*} t} & t\in[0,\Delta_d)\\
\mathcal{M}_i & t\geq\Delta_d
\end{array}\right.\text{,}\label{eq: density FO}\\
\mathcal{M}_i &=& \frac{\lambda_i}{\lambda}\text{,}\label{eq: integration M FO}\\
c_i &=& \frac{\lambda_{i}\lambda_{i^*} \left(\lambda_{i} y^2   + \lambda_{i}   y_{i^*}  + \lambda_{i^*}  y- \lambda_{i}  y^2y_{i^*} \right)}{{\lambda}^2  \left( 1+yy_i + yy_{i^*}-  y-y^2 \right)}\text{.}\label{eq: constant FO}
\end{eqnarray}
\end{prop}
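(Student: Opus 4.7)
The plan is to mirror \Cref{prop: fifo delta_s=0}, adapted to the richer FO mapping. First, I would establish the support claim that $p_{\mathbf{T}}$ is concentrated on the intersection of $\{|t^1-t^2| = \Delta_d\}$ with $\{t^1+t^2 \leq \Delta_d\}$. The line condition $|t^1-t^2|=\Delta_d$ follows exactly as in the FIFO case: Region 3 of \Cref{table: FO mapping} produces in steady state the self-referential equation $p_{\mathbf{T}}(t+\Gamma,t) = \lambda_1\int_0^\infty p_{\mathbf{T}}(t+\Gamma+x,t+x)e^{-\lambda x}dx$ for every $\Gamma>\Delta_d$, and \Cref{lem: zero function} then forces the density to vanish off the line (symmetrically for $t^2>t^1$). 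The upper bound $t^1+t^2 \leq \Delta_d$ is specific to FO; I would derive it by observing that Regions 7 and 8 (and their $s_{i+1}=2$ conjugates) are the only regions whose image can have $t^1+t^2>\Delta_d$, and that their defining hypotheses ($T_i^2-T_i^1 > \Delta_d$ or $s \geq x+\Delta_d$) become incompatible with $|T^1-T^2|=\Delta_d$ once the line support is established, so they contribute no continuous mass beyond the point $(\Delta_d,0)$.

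Next, writing $g_i$ as in the FIFO proof, I would reduce the fixed-point equation $p_{\mathbf{T}} = \mathcal{M}(p_{\mathbf{T}})$ to a one-dimensional integral equation on the line. A systematic pass over all sixteen region/lane combinations shows that exactly two contribute continuously to $g_1$ at an interior point $t \in (0,\Delta_d)$: Region 3 with input on $g_1$ (giving $\lambda_1 e^{\lambda t}\int_t^{\Delta_d} g_1(u) e^{-\lambda u}\,du$) and Region 4 with input on $g_2$ (giving, after the change of variable $u = t + x - \Delta_d$ and using the support bound on $g_2$, the term $\lambda_1 y\, \mathcal{I}_2\, e^{\lambda t}$). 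This yields
\begin{equation*}
g_1(t) \;=\; \lambda_1 e^{\lambda t}\int_t^{\Delta_d} g_1(u) e^{-\lambda u}\,du \;+\; \lambda_1 y\, \mathcal{I}_2\, e^{\lambda t}, \quad t \in (0,\Delta_d).
\end{equation*}
Setting $H(t) := g_1(t) e^{-\lambda t}$ and differentiating (as in the proof of \Cref{lem: zero function}) turns this into $H'(t) = -\lambda_1 H(t)$, so $g_1(t) = c_1 e^{\lambda_2 t}$ on $(0,\Delta_d)$; integrating recovers the claimed form of $G_i$, and the boundary condition at $t=\Delta_d$ gives $c_1 y_1 = \lambda_1 y\, \mathcal{I}_2$ (and symmetrically $c_2 y_2 = \lambda_2 y\, \mathcal{I}_1$).

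To pin down the four unknowns $(c_1, c_2, \mathcal{M}_1, \mathcal{M}_2)$ I would combine (i) the two identities above linking $c_i$ to $\mathcal{I}_{i^*}$, (ii) the total-probability constraint $\mathcal{M}_1 + \mathcal{M}_2 = 1$, and (iii) a conservation argument that each lane-$i$ arrival leaves the state in the $g_i$ support, so in steady state the long-run fraction of time with lane $i$ leading equals $\lambda_i/\lambda$, giving $\mathcal{M}_i = \lambda_i/\lambda$. Expressing $\mathcal{I}_i$ in terms of $c_i$ and $\mathcal{M}_i$ via the point masses $\widehat g_i(0) = c_i/\lambda_{i^*}$ (from Region 1) and $\widehat g_i(\Delta_d) = \mathcal{M}_i - c_i/(\lambda_{i^*} y_{i^*})$ (from conjugate Region 5), and back-substituting, reduces the problem to a $2\times 2$ linear system in $c_1, c_2$ whose solution is the stated closed form. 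The main obstacle is Step 2: correctly excluding all but two of the sixteen region/lane combinations from the interior contribution to $g_1$, since the near-miss cases (conjugate Region 6, Region 7, and Region 8) require the support bound from Step 1 to rule them out cleanly.
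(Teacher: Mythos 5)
Your overall skeleton---reduce the support to the two lines, derive a one-dimensional integral equation on $(0,\Delta_d)$, solve it as $c_i e^{\lambda_{i^*}t}$, and close a linear system for $(c_1,c_2,\mathcal{M}_1,\mathcal{M}_2)$---is the paper's route, and your interior equation and its differentiation to $g_1(t)=c_1e^{\lambda_2 t}$ are correct. However, two of your supporting arguments fail. The first is the upper bound $t^1+t^2\le\Delta_d$, i.e.\ $g_i(t)=0$ for $t>\Delta_d$. It is not true that Regions 7 and 8 are the only regions whose image can land in $\{t^1+t^2>\Delta_d\}$: with $\Delta_s=0$, Region 3 maps $g_i$ mass at level $t+x$ to $g_i$ mass at level $t$, and the conjugate of Region 6 does the same, so both map the half-line $\{t>\Delta_d\}$ into itself. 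Showing that Regions 7 and 8 are vacuous therefore does not exclude a self-sustaining fixed point supported on $t>\Delta_d$. The argument you actually need (the paper's) is that for $t>\Delta_d$ the only contributions to $g_i(t)$ are Region 3 (weight $P_s(i)$) and the conjugate Region 6 (weight $P_s(i^*)$), which combine to $g_i(t)=\lambda\int_0^\infty g_i(t+x)e^{-\lambda x}dx$, after which \Cref{lem: zero function} with $a=\lambda$ annihilates the mass.

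The second gap is your justification of $\mathcal{M}_i=\lambda_i/\lambda$. The claim that ``each lane-$i$ arrival leaves the state in the $g_i$ support'' is precisely what fails under FO: a lane-$1$ arrival falling in Regions 5--8 leaves lane $2$ with the larger delay (that is the whole point of reordering), so the last-scheduled lane is not the last-arrived lane. The identity $\mathcal{M}_i=\lambda_i/\lambda$ is true but is an \emph{output} of the linear system---the paper obtains it only by solving \eqref{eq: c FO} through \eqref{eq: integral to one} jointly---not something you may assume as a conservation law. Relatedly, your boundary relation $c_1y_1=\lambda_1 y\,\mathcal{I}_2$ drops the term $\lambda_1 y\,\widehat g_1(\Delta_d)$ that Region 3 feeds back from the atom at $\Delta_d$ into the interior; the constants will come out wrong unless that atom is carried through. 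With these two steps repaired, the rest of your plan goes through and reduces to the same $2\times 2$ system as the paper's.
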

\begin{proof}
Similar to the proof of  \Cref{prop: fifo delta_s=0}, it is easy to show that $p_{\mathbf{T}}(t^1,t^2)=0$ if $|t^1-t^2|\neq -\Delta_d$. For $t>\Delta_d$, consider regions 3 and 6, $g_i(t) = \int_0^\infty \left[P_s(i)g_i(t+x)+P_s(i^*)g_{i}(t+x)\right]p_xdx$. Hence,
\begin{equation}
g_i(t) = \lambda \int_0^\infty g_i(t+x)e^{-\lambda x}dx\text{.}
\end{equation}
According to \Cref{lem: zero function}, $g_i(t)\equiv 0$ for $t>\Delta_d$. Hence, $p_{\mathbf{T}}(t^1,t^2)=0$ if $|t^1-t^2|\neq -\Delta_d$ or $t^1+t^2>\Delta_d$.

For $t\in(0,\Delta_d)$, consider regions 3 and 4, \eqref{eq: g1 formula} holds. Similar to the proof in \Cref{prop: fifo delta_s=0} from \eqref{eq: g1 formula} to \eqref{eq: c FIFO}, we conclude that $\tilde g_i = c_i e^{\lambda_{i^*} t}$ for some constant $c_i$ such that 
\begin{eqnarray}
c_i = \lambda_{i^*}\widehat g_i(0)\text{.}\label{eq: c FO}
\end{eqnarray}
Then \eqref{eq: density FO} is verified. We solve for $c_i$ below.

Consider region 1. The point mass at $0$ has the same expression as in the FIFO case, 
\begin{equation}
\widehat g_i(0) = \frac{\lambda_i}{\lambda}\left[\mathcal{I}_i+e^{-\lambda\Delta_d}\mathcal{I}_{i^*}\right]\text{.}
\end{equation}

Consider region 5. The point mass at $\Delta_d$ satisfies $\widehat g_i(\Delta_d) =\lambda_i\int_0^{\infty}\int_x^{\Delta_d} g_{i^*}(\tau)d\tau e^{-\lambda x}dx$. By changing the order of integration, we have $\widehat g_i(\Delta_d)= \lambda_i\int_0^{\Delta_d}\int_0^{\tau} e^{-\lambda x}dxg_{i^*}(\tau)d\tau= \frac{\lambda_i}{\lambda}\int_0^{\Delta_d}(1-e^{-\lambda\tau})g_{i^*}(\tau)d\tau$. Hence,
\begin{equation}
\widehat g_i(\Delta_d) = \frac{\lambda_i}{\lambda}[\mathcal{M}_{i^*}-\mathcal{I}_{i^*}]\text{.}
\end{equation}


Given the definition in \eqref{eq: defn G M I}, 
\begin{eqnarray}
\mathcal{M}_i &=& \widehat g_i(0) + \frac{c_i}{\lambda_{i^*}}\left[e^{\lambda_{i^*}\Delta_d}-1\right] + \widehat g_i(\Delta_d)\text{,}\\
\mathcal{I}_i &=& \widehat g_i(0)+ \frac{c_i}{\lambda_i}\left[1-e^{-\lambda_i\Delta_d}\right]+\widehat g_i(\Delta_d)e^{-\lambda\Delta_d}\text{.}
\end{eqnarray}

Moreover, the probability should add up to one,
\begin{equation}
\mathcal{M}_1 + \mathcal{M}_2 = 1\text{.}\label{eq: integral to one}
\end{equation}
Solving \eqref{eq: c FO} to \eqref{eq: integral to one}, we conclude that $\mathcal{M}_i=\frac{\lambda_i}{\lambda}$ and $c_i$ satisfies \eqref{eq: constant FO}.

\end{proof}



\begin{cor}[Steady State Vehicle Delay under FO]\label{cor: scalar delay FO}
The steady state vehicle delay under FO has the distribution
\begin{eqnarray}
P_d(t) &=& \frac{c_2}{\lambda_1} e^{\lambda_1 t} + \frac{c_1}{\lambda_2} e^{\lambda_2 t} + \frac{2\lambda_1\lambda_2}{\lambda^2}(1- e^{-\lambda t}) \label{eq: Pd FO}\\
&&+ \frac{c_2}{\lambda_2y_1}(e^{-\lambda t}- e^{-\lambda_1 t}) + \frac{c_1}{\lambda_1y_2}(e^{-\lambda t} -e^{-\lambda_2 t})\text{,}\nonumber
\end{eqnarray}
with expected delay
\begin{eqnarray}
E(d) &=& \frac{c_2}{\lambda_1} \mathcal{E}(\lambda_1) + \frac{c_1}{\lambda_2} \mathcal{E}(\lambda_2) \nonumber\\
&& - \frac{c_2}{\lambda_2y_1}\mathcal{E}(-\lambda_1) - \frac{c_1}{\lambda_1y_2}\mathcal{E}(-\lambda_2)\nonumber\\
&& +  \left(\frac{c_2}{\lambda_2y_1} + \frac{c_1}{\lambda_1y_2} -\frac{2\lambda_1\lambda_2}{\lambda^2}\right)\mathcal{E}(-\lambda)\text{,}\label{eq: Ed FO}
\end{eqnarray}
where $\mathcal{E}(\cdot)$ follows \eqref{eq: integral 0 delta_d}.
\end{cor}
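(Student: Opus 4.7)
The plan is to compute $P_d(t)=\Pr(d_{i+1}\le t)$ in steady state by substituting the FO dynamics into \eqref{eq: scalar delay and traffic delay}, carrying out a region-by-region integration against the distribution supplied by \Cref{prop: fo delta_s=0}, and then integrating $t\,dP_d(t)$ to obtain \eqref{eq: Ed FO}.

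First, I would derive a closed form for $d_{i+1}$ in each region of \Cref{table: FO mapping}. With $\Delta_s=0$ and $s_{i+1}=1$, straightforward substitution yields $d_{i+1}=0$ in regions 1 and 7, $d_{i+1}=T_i^1-x_i$ in regions 2, 3, 6, 8, $d_{i+1}=T_i^2+\Delta_d-x_i$ in region 4, and $d_{i+1}=x_i+\Delta_d-T_i^2$ in region 5 (the insertion case, where the delay is borne by the last lane-2 vehicle rather than by vehicle $i+1$ itself); the $s_{i+1}=2$ case follows by the conjugation $i\mapsto i^*$. By \Cref{prop: fo delta_s=0} the steady-state $p_{\mathbf{T}}$ lives on the two branches $t^1-t^2=\pm\Delta_d$ with $t^1+t^2\le\Delta_d$, so one coordinate lies in $[0,\Delta_d]$ with density $c_i e^{\lambda_{i^*}\tau}+(c_i/\lambda_{i^*})\delta(\tau)$ and the other is shifted by $\mp\Delta_d$. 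A quick check against the region conditions shows that regions 2, 6, 7, 8 have zero steady-state probability, so only regions 1, 3, 4, 5 contribute, and $d_{i+1}\in[0,\Delta_d]$ throughout.

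I would then evaluate $\Pr(d_{i+1}\le t,\text{region }r,s_{i+1}=k)$ as a double integral over $\vec\tau$ and $x$, weighted by $P_s(k)\,p_{\mathbf{T}}(\vec\tau)\,p_x(x)$. Because each branch density is a single exponential and $p_x(x)=\lambda e^{-\lambda x}$, every integral reduces to an elementary $\int e^{(\lambda_{i^*}-\lambda)\tau}d\tau$ (possibly with an upper cutoff depending on $t$) and produces terms of the form $e^{\lambda_i t}$, $e^{-\lambda_i t}$, or $e^{-\lambda t}$. Summing contributions across both values of $s_{i+1}$ and simplifying with $\widehat g_i(0)=c_i/\lambda_{i^*}$ and \eqref{eq: constant FO} yields \eqref{eq: Pd FO}. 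For the expected delay, since $P_d(t)=1$ for $t\ge\Delta_d$, each summand of \eqref{eq: Pd FO} is of the form $\alpha e^{\beta t}$ on $[0,\Delta_d]$ and contributes $\alpha\,\mathcal{E}(\beta)$ to $E(d)=\int_0^{\Delta_d} t\,dP_d(t)$ via \eqref{eq: integral 0 delta_d}, giving \eqref{eq: Ed FO}.

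The main obstacle is not any one integral but the bookkeeping: verifying which of the eight regions are reachable from the steady-state support, handling the point masses $\widehat g_i(0)$ alongside the continuous density in each integral, and reconciling the artificial clamping $T_i^j\ge -\Delta_d$ in \Cref{table: FO mapping} with the unclamped shift implicit in \eqref{eq: scalar delay and traffic delay}, so that the ``no delay'' regions genuinely contribute $d_{i+1}=0$ rather than spurious $\pm\Delta_d+x_i$ terms. Once this is handled consistently, the remaining computations are routine exponential integrations.
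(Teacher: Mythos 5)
Your proposal is correct and follows essentially the same route as the paper: the paper's proof starts from the density formula $p_d(t)=\sum_{i}P_s(i)\int_0^\infty\left[g_i(t+x)+g_{i^*}(t+x-\Delta_d)+g_{i^*}(x-t+\Delta_d)\right]p_x\,dx$, whose three terms are exactly your regions 3, 4 and 5 (with region 1 supplying the atom at $t=0$), and then substitutes the explicit $g_i$ from \Cref{prop: fo delta_s=0} and integrates $\int_0^{\Delta_d}t\,dP_d(t)$ termwise to get \eqref{eq: Ed FO}. Your region-by-region bookkeeping, including the observation that regions 2 and 6--8 carry no steady-state probability and the care needed with the clamping at $-\Delta_d$, simply makes explicit the derivation the paper leaves implicit.
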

\begin{proof}
By \eqref{eq: scalar delay and traffic delay}, the steady state distribution of delay satisfies $p_d(t) = \sum_{i = 1,2}P_s(i)\int_0^{\infty} [g_i(t+x)+g_{i^*}(t+x-\Delta_d)+g_{i^*}(x-t+\Delta_d)]p_{x}dx$. Using the result from \Cref{prop: fo delta_s=0}, the steady state distribution of the vehicle delay satisfies \eqref{eq: Pd FO}. 
It is easy to verify $P_d(0) = \frac{c_2}{\lambda_1}  + \frac{c_1}{\lambda_2}  = \widehat g_1(0)+\widehat g_2(0)$ and $P_d(\Delta_d) = 1$.
The expected mean $E(d) = \int_0^{\Delta_d} t dP_d(t)$ satisfies \eqref{eq: Ed FO}.
\end{proof}

\Cref{cor: scalar delay FO} implies that the distribution of vehicle delay in FO no longer equals the sum of traffic delay in all lanes. In FIFO, the two equal by \Cref{cor: scalar delay FIFO}.

\section{Analysis\label{sec: result}}
This section discusses how delay is affected by traffic density $\lambda$, density ratio $r:=\lambda_1/\lambda_2$, passing order (FIFO or FO), and temporal gap $\Delta_d$. $\Delta_s=0$ is assumed. In particular, we evaluate the probability of zero delay $P_d(0)=\widehat g_1(0)+\widehat g_2(0)$ in \cref{fig: zero delay}, expected delay $E(d)$ in \cref{fig: expected delay}, and steady state distribution of delay $P_d(t)$ in \cref{fig: distribution}. The curves are from direct analysis. Approximation \eqref{eq: FIFO approx 1} is used for FIFO. The accuracy of the analytical solutions is verified by EDS in \Cref{fig: expected delay delta}.

\begin{figure}[t]
\subfloat[Fix $\Delta_d = \SI{2}{\second}$.\label{fig: zero delay lambda}]{
\input{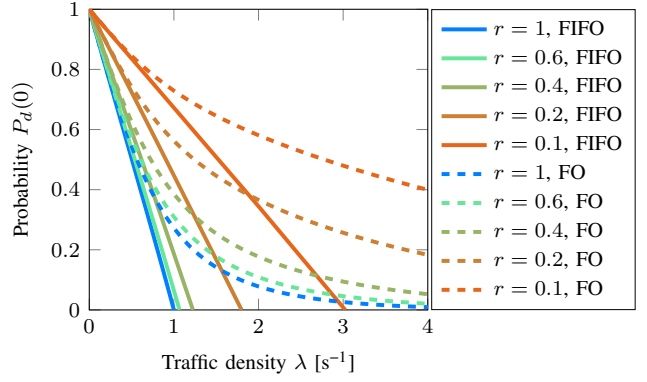}}\\
\subfloat[Fix $\lambda = \SI{1}{\per\second}$.\label{fig: zero delay ratio}]{
\input{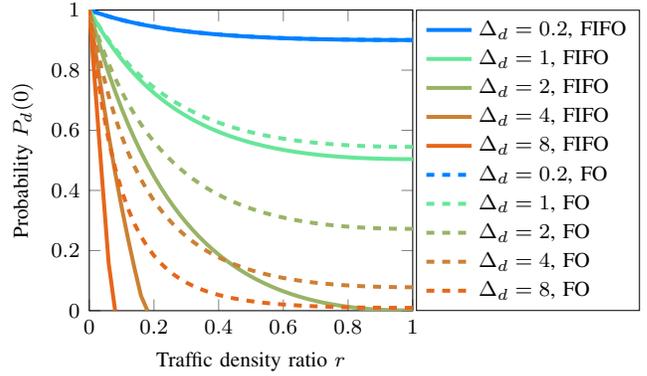}}\\
\subfloat[Fix $r = 0.5$.\label{fig: zero delay delta}]{
\input{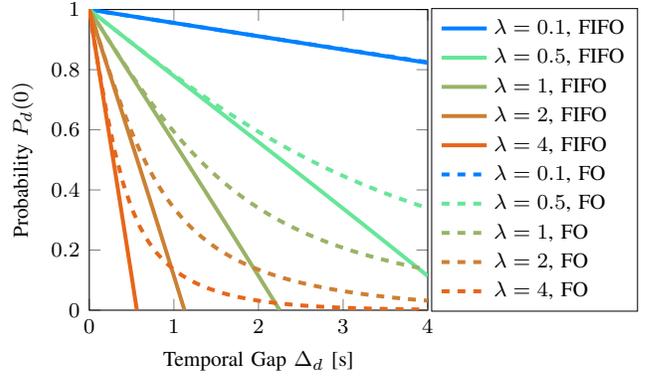}}
\caption{The probability of zero delay $P_d(0)$.}\label{fig: zero delay}
\end{figure}

\begin{figure}[t]
\subfloat[Fix $\Delta_d = \SI{2}{\second}$.\label{fig: expected delay lambda}]{
\input{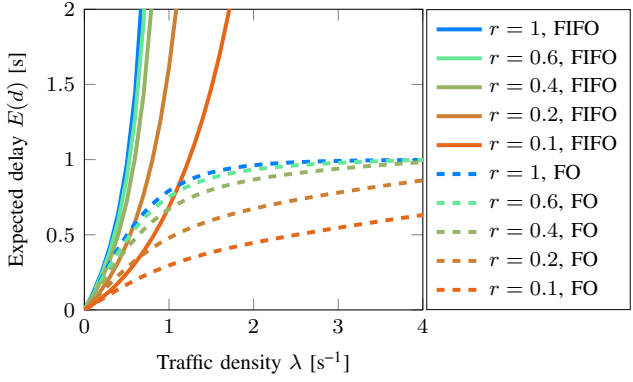}}\\
\subfloat[Fix $\lambda = \SI{1}{\per\second}$.\label{fig: expected delay ratio}]{
\input{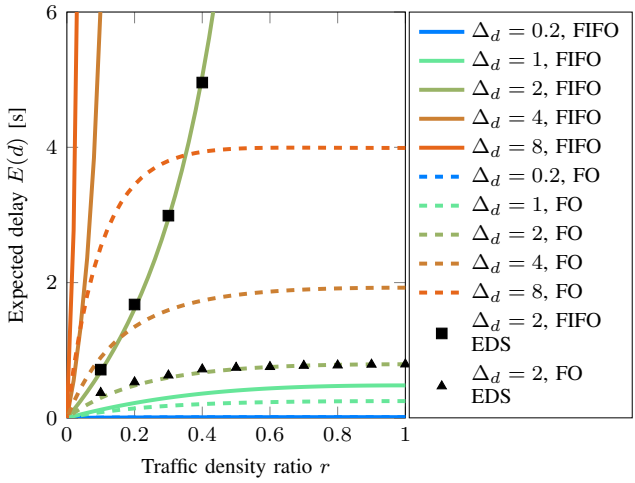}
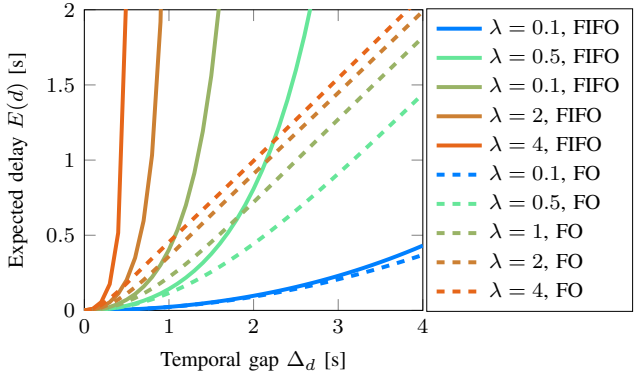}\\
\subfloat[Fix $r = 0.5$.\label{fig: expected delay delta}]{
%
%
\begin{tikzpicture}

\begin{axis}[%
width=4.5cm,
height=4cm,
at={(1.011in,0.642in)},
scale only axis,
xmin=0,
xmax=4,
ymin=0,
ymax=2,
xlabel={Temporal gap $\Delta_d$ [\si{\second}]},
ylabel={Expected delay $E(d)$ [\si{\second}]},
font=\footnotesize,
axis background/.style={fill=white},
legend style={legend cell align=left, align=left, draw=white!6!black, font=\footnotesize},
legend style={at={(1.01,0.5)},anchor=west}
]
\addplot [color=mycolor1, line width=1.5pt]
  table[row sep=crcr]{%
  0	0\\
0.1	0.000223112876480007\\
0.2	0.000896157190417115\\
0.3	0.00202487862314924\\
0.4	0.00361520954293046\\
0.5	0.00567324066887613\\
0.6	0.00820520985217327\\
0.7	0.0112174970451992\\
0.8	0.014716622279\\
0.9	0.018709245475804\\
1	0.0232021673207697\\
1.1	0.0282023309300569\\
1.2	0.0337168240731156\\
1.3	0.0397528818218567\\
1.4	0.0463178895401516\\
1.5	0.0534193861552468\\
1.6	0.0610650676708266\\
1.7	0.0692627916541994\\
1.8	0.0780205784842292\\
1.9	0.0873466190550655\\
2	0.0972492768648466\\
2.1	0.107737092954449\\
2.2	0.118818790505383\\
2.3	0.130503279596323\\
2.4	0.142799662117755\\
2.5	0.15571723684544\\
2.6	0.169265504674337\\
2.7	0.18345417401547\\
2.8	0.19829316635888\\
2.9	0.213792632824759\\
3	0.229962920160189\\
3.1	0.246814632560442\\
3.2	0.264358603142316\\
3.3	0.282605908601693\\
3.4	0.301567875877049\\
3.5	0.321256089016364\\
3.6	0.341682396254351\\
3.7	0.362858917307299\\
3.8	0.384798050893223\\
3.9	0.407512482485361\\
4	0.431015192307477\\
};
\addlegendentry{$\lambda = 0.1$, FIFO}

\addplot [color=mycolor2, line width=1.5pt]
  table[row sep=crcr]{%
  0	0\\
0.1	0.00113464813415671\\
0.2	0.00464043346916943\\
0.3	0.0106838773029284\\
0.4	0.0194498555113157\\
0.5	0.0311434481156397\\
0.6	0.0459925849802807\\
0.7	0.0642512213291901\\
0.8	0.0862030527022651\\
0.9	0.112165825281979\\
1	0.14249638531198\\
1.1	0.17759654544169\\
1.2	0.217920072309394\\
1.3	0.263980805039392\\
1.4	0.316362570524537\\
1.5	0.375730619879822\\
1.6	0.442846114266206\\
1.7	0.518582715629159\\
1.8	0.603947458716659\\
1.9	0.700107175163099\\
2	0.808418882702805\\
2.1	0.93047021726689\\
2.2	1.06812910220748\\
2.3	1.22360584365843\\
2.4	1.39953952915285\\
2.5	1.59909967398209\\
2.6	1.82612065624312\\
2.7	2.08529894224421\\
2.8	2.38243629891611\\
2.9	2.72479622248859\\
3	3.12156580000804\\
3.1	3.58456483980431\\
3.2	4.12924888926378\\
3.3	4.77622096166126\\
3.4	5.55360764801445\\
3.5	6.50076035972688\\
3.6	7.67444788792846\\
3.7	9.15981986061028\\
3.8	11.0908790255363\\
3.9	13.6907039166141\\
4	17.3604961456793\\
};
\addlegendentry{$\lambda = 0.5$, FIFO}

\addplot [color=mycolor3, line width=1.5pt]
  table[row sep=crcr]{%
  0	0\\
0.1	0.00232021673473125\\
0.2	0.00972492775928612\\
0.3	0.0229962925142535\\
0.4	0.043101526448842\\
0.5	0.071248192968207\\
0.6	0.108960037035903\\
0.7	0.158181302002907\\
0.8	0.221423103510468\\
0.9	0.301973855799483\\
1	0.404209626403348\\
1.1	0.534064859773907\\
1.2	0.699771266042273\\
1.3	0.913061600776453\\
1.4	1.19122320773131\\
1.5	1.560790140904\\
1.6	2.06463883936337\\
1.7	2.77684446416148\\
1.8	3.83729613227624\\
1.9	5.54570486527913\\
2	8.68101899601102\\
2.1	16.098179264254\\
};
\addlegendentry{$\lambda = 0.1$, FIFO}

\addplot [color=mycolor4, line width=1.5pt]
  table[row sep=crcr]{%
  0	0\\
0.1	0.00486246388178127\\
0.2	0.0215507634510329\\
0.3	0.054480021191979\\
0.4	0.110711562079086\\
0.5	0.202104860856488\\
0.6	0.349885804994925\\
0.7	0.595612711040567\\
0.8	1.03232330672864\\
0.9	1.91866740806624\\
1	4.34070494612481\\
1.1	26.7277886322575\\
};
\addlegendentry{$\lambda = 2$, FIFO}

\addplot [color=mycolor5, line width=1.5pt]
  table[row sep=crcr]{%
  0	0\\
0.1	0.0107753817374846\\
0.2	0.0553557815321253\\
0.3	0.17494295265822\\
0.4	0.516163003556102\\
0.5	2.17040729496102\\
};
\addlegendentry{$\lambda = 4$, FIFO}

\addplot [color=mycolor1, dashed, line width=1.5pt]
  table[row sep=crcr]{%
  0	0\\
0.1	0.000222588778876144\\
0.2	0.000891790720521536\\
0.3	0.00200968996783389\\
0.4	0.00357827768791304\\
0.5	0.00559945133756584\\
0.6	0.00807501398296904\\
0.7	0.0110066736746728\\
0.8	0.0143960428790008\\
0.9	0.0182446379667788\\
1	0.0225538787602057\\
1.1	0.0273250881385611\\
1.2	0.0325594917033321\\
1.3	0.0382582175032295\\
1.4	0.0444222958194544\\
1.5	0.0510526590114728\\
1.6	0.0581501414234507\\
1.7	0.0657154793514013\\
1.8	0.073749311071002\\
1.9	0.0822521769259376\\
2	0.0912245194765404\\
2.1	0.100666683708403\\
2.2	0.110578917300553\\
2.3	0.120961370952704\\
2.4	0.13181409877099\\
2.5	0.143137058711556\\
2.6	0.154930113081245\\
2.7	0.167193029094604\\
2.8	0.179925479486322\\
2.9	0.193127043178161\\
3	0.206797205999381\\
3.1	0.220935361459592\\
3.2	0.235540811572904\\
3.3	0.250612767732206\\
3.4	0.266150351632329\\
3.5	0.282152596240833\\
3.6	0.298618446815073\\
3.7	0.315546761964188\\
3.8	0.332936314754598\\
3.9	0.350785793857575\\
4	0.369093804737391\\
};
\addlegendentry{$\lambda = 0.1$, FO}

\addplot [color=mycolor2, dashed, line width=1.5pt]
  table[row sep=crcr]{%
  0	0\\
0.1	0.00111989026751317\\
0.2	0.00451077575204114\\
0.3	0.0102105318022946\\
0.4	0.0182449038953081\\
0.5	0.0286274117423112\\
0.6	0.0413594411998762\\
0.7	0.0564305192481666\\
0.8	0.0738187609474783\\
0.9	0.0934914720555434\\
1	0.115405886931042\\
1.1	0.139510018486029\\
1.2	0.165743595262164\\
1.3	0.194039060127537\\
1.4	0.224322605514844\\
1.5	0.256515221406187\\
1.6	0.290533734250113\\
1.7	0.326291817497052\\
1.8	0.363700957284322\\
1.9	0.402671359825878\\
2	0.443112790116856\\
2.1	0.484935334523509\\
2.2	0.528050082595858\\
2.3	0.572369725939792\\
2.4	0.617809074169576\\
2.5	0.664285489805237\\
2.6	0.711719245476356\\
2.7	0.760033807954318\\
2.8	0.809156054380895\\
2.9	0.859016426622492\\
3	0.909549029991681\\
3.1	0.960691682678525\\
3.2	1.01238592216134\\
3.3	1.06457697465618\\
3.4	1.1172136933496\\
3.5	1.17024847077065\\
3.6	1.22363713022009\\
3.7	1.27733880071114\\
3.8	1.33131577940302\\
3.9	1.38553338504056\\
4	1.4399598054625\\
};
\addlegendentry{$\lambda = 0.5$, FO}

\addplot [color=mycolor3, dashed, line width=1.5pt]
  table[row sep=crcr]{%
  0	0\\
0.1	0.00225538787602057\\
0.2	0.00912245194765404\\
0.3	0.0206797205999381\\
0.4	0.0369093804737392\\
0.5	0.0577029434655211\\
0.6	0.0828717976310822\\
0.7	0.112161302757422\\
0.8	0.145266867125056\\
0.9	0.181850478642161\\
1	0.221556395058428\\
1.1	0.264025041297929\\
1.2	0.308904537084788\\
1.3	0.355859622738178\\
1.4	0.404578027190447\\
1.5	0.45477451499584\\
1.6	0.506192961080672\\
1.7	0.558606846674801\\
1.8	0.611818565110043\\
1.9	0.665657889701508\\
2	0.719979902731249\\
2.1	0.77466262584651\\
2.2	0.829604535436775\\
2.3	0.884722096079807\\
2.4	0.939947402899973\\
2.5	0.995225990023453\\
2.6	1.05051483664977\\
2.7	1.10578058350932\\
2.8	1.16099795939824\\
2.9	1.21614840885882\\
3	1.27121890682139\\
3.1	1.32620094323141\\
3.2	1.38108965962279\\
3.3	1.43588311970035\\
3.4	1.4905816968373\\
3.5	1.54518756267032\\
3.6	1.59970426246778\\
3.7	1.65413636450985\\
3.8	1.70848917225492\\
3.9	1.7627684895173\\
4	1.81698043021357\\
};
\addlegendentry{$\lambda = 1$, FO}

\addplot [color=mycolor4, dashed, line width=1.5pt]
  table[row sep=crcr]{%
  0	0\\
0.1	0.00456122597382702\\
0.2	0.0184546902368696\\
0.3	0.0414358988155411\\
0.4	0.0726334335625282\\
0.5	0.110778197529214\\
0.6	0.154452268542394\\
0.7	0.202289013595224\\
0.8	0.253096480540336\\
0.9	0.305909282555022\\
1	0.359989951365625\\
1.1	0.414802267718387\\
1.2	0.469973701449987\\
1.3	0.525257418324884\\
1.4	0.580498979699121\\
1.5	0.635609453410694\\
1.6	0.690544829811396\\
1.7	0.745290848418651\\
1.8	0.799852131233888\\
1.9	0.854244586127459\\
2	0.908490215106786\\
2.1	0.962613650827724\\
2.2	1.01663991303067\\
2.3	1.07059301303904\\
2.4	1.12449513923671\\
2.5	1.17836623424357\\
2.6	1.23222383103622\\
2.7	1.2860830557406\\
2.8	1.33995673352504\\
2.9	1.39385555423214\\
3	1.44778826854382\\
3.1	1.50176189534391\\
3.2	1.5557819277939\\
3.3	1.60985253036468\\
3.4	1.66397672230572\\
3.5	1.71815654522703\\
3.6	1.77239321393278\\
3.7	1.8266872505989\\
3.8	1.88103860298568\\
3.9	1.93544674773045\\
4	1.9899107799514\\
};
\addlegendentry{$\lambda = 2$, FO}

\addplot [color=mycolor5, dashed, line width=1.5pt]
  table[row sep=crcr]{%
  0	0\\
0.1	0.00922734511843479\\
0.2	0.0363167167812641\\
0.3	0.077226134271197\\
0.4	0.126548240270168\\
0.5	0.179994975682812\\
0.6	0.234986850724993\\
0.7	0.29024948984956\\
0.8	0.345272414905698\\
0.9	0.399926065616944\\
1	0.454245107553393\\
1.1	0.508319956515336\\
1.2	0.562247569618356\\
1.3	0.616111915518112\\
1.4	0.669978366762518\\
1.5	0.723894134271909\\
1.6	0.77789096389695\\
1.7	0.831988361152862\\
1.8	0.88619660696639\\
1.9	0.940519301492842\\
2	0.994955389975698\\
2.1	1.04950071596784\\
2.2	1.10414917839229\\
2.3	1.15889357239162\\
2.4	1.21372618578589\\
2.5	1.26863921104446\\
2.6	1.32362502059948\\
2.7	1.37867634259991\\
2.8	1.43378636529399\\
2.9	1.48894879112134\\
3	1.54415785607696\\
3.1	1.59940832570418\\
3.2	1.65469547591381\\
3.3	1.71001506447868\\
3.4	1.76536329732276\\
3.5	1.82073679246092\\
3.6	1.8761325435283\\
3.7	1.93154788417905\\
3.8	1.98698045416338\\
3.9	2.04242816755951\\
4	2.09788918340491\\
};
\addlegendentry{$\lambda = 4$, FO}

\end{axis}
\end{tikzpicture}
\caption{The expected delay $E(d)$.}\label{fig: expected delay}
\end{figure}

\subsection{Delay and Traffic Density}
In general, larger traffic density results in larger delay. According to \Cref{fig: zero delay lambda}, the probability of zero delay $P_d(0)$ drops when the traffic density goes up. In FIFO, it drops linearly and reaches zero when the equality in \eqref{eq: convergence FIFO} holds, where
\begin{equation}
\lambda = \frac{(1+r)^2}{2\Delta_dr}\text{.}\label{eq: FIFO convergence condition simplified}
\end{equation}
In FO, $P_d(0)$ drops with decreasing rate. 
According to \Cref{fig: expected delay lambda}, the expected delay $E(d)$ grows with the traffic density $\lambda$. In FIFO, it grows exponentially with $\lambda$, and goes to infinity when $\lambda$ approaches \eqref{eq: FIFO convergence condition simplified}. In FO, it grows with decreasing rate when $\lambda$ increases. 
\Cref{fig: delay distribution lambda} illustrates the distribution of delay for $\lambda \in\{0.1,0.5,1,2,4\}$, $\Delta_d=2$, and $r=0.5$. The distribution does not converge for $\lambda>1.125$ in FIFO, while it always converge in FO. It is easy to verify that the necessary condition \eqref{eq: convergence FO} is always satisfied when $\Delta_s=0$. 

\subsection{Delay and Density Ratio}
In general, there are more delays when the traffic is more balanced. According to \Cref{fig: zero delay ratio}, $P_d(0)$ drops with decreasing rate when the density ratio approaches $1$. In FIFO, it reaches zero when \eqref{eq: FIFO convergence condition simplified} holds. In FO, $P_d(0)$ is relatively constant for $r>0.5$. 
According to \Cref{fig: expected delay ratio}, the expected delay $E(d)$ grows with respect to the density ratio $r$. In FIFO, the expected delay grows exponentially with $r$ when there is a solution for $r\leq 1$ in \eqref{eq: FIFO convergence condition simplified} for fixed $\lambda$ and $\Delta_d$, e.g.,
\begin{equation}
\lambda\Delta_d\geq \min_{r\in(0,1]}\frac{(1+r)^2}{2r} = 2\text{.}
\end{equation}
The expected delay grows with decreasing rate when there is no solution for $r\leq 1$ in \eqref{eq: FIFO convergence condition simplified}, i.e., $\lambda\Delta_d<2$. In FO, the expected delay grows in decreasing rate when $r$ approaches~$1$. When $\lambda\Delta_d$ is small, the expected delay in FIFO is close to the expected delay in FO. 

\begin{figure}[t]
\subfloat[Fix $\Delta_d =\SI{2}{\second}$ and $r=0.5$.\label{fig: delay distribution lambda}]{
\input{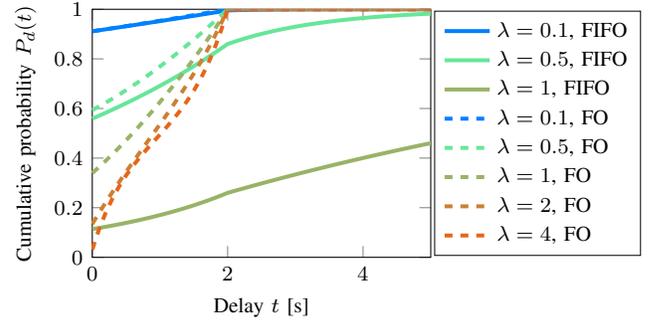}}\\
\subfloat[Fix $\Delta_d =\SI{2}{\second}$ and $\lambda=\SI{1}{\per\second}$.\label{fig: delay distribution ratio}]{
\input{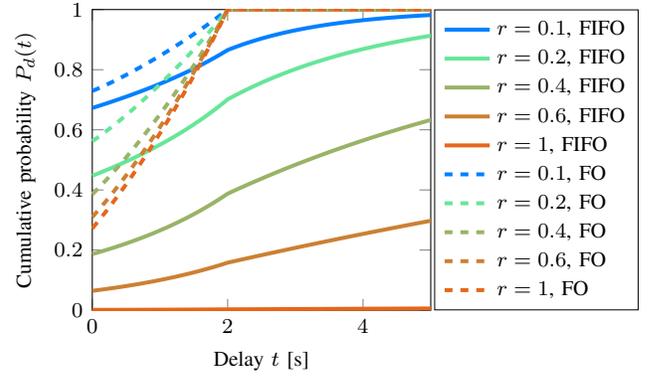}}\\
\subfloat[Fix $\lambda$=\SI{1}{\per\second} and $r=0.5$.\label{fig: delay distribution delta}]{
\input{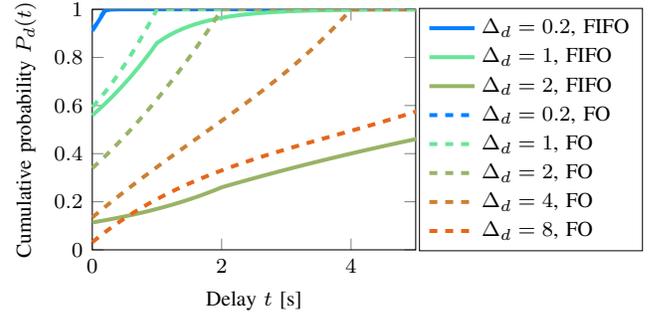}}
\caption{Distribution of steady state vehicle delay $P_d(t)$.}
\label{fig: distribution}
\end{figure}

\subsection{Delay and Passing Order}

For all scenarios in \Cref{fig: zero delay}, \Cref{fig: expected delay}, and \Cref{fig: distribution}, FO results in smaller delay than FIFO. The advantage of FO is due to the fact that the passing order is adaptable to real time scenarios. They have similar performances when either $\lambda$, $\Delta_d$, or $r$ is small. In those cases, the order determined by FO is close to the order in FIFO. Moreover, it is worth noting that the delay distribution in \cref{fig: distribution} is not computed for a single vehicle, but for all vehicles on average. Such average delay does not exceed $\Delta_d$ in FO, but it is possible for individual vehicles to have delay greater than $\Delta_d$. Though FO is efficient in the sense that it minimizes delay, it sacrifices fairness by not obeying the passing order determined by the desired passing time. As a consequence, certain vehicles may experience larger delay compared to that in the FIFO case. The tradeoff between fairness and efficiency in different policies will be studied in the future.


\subsection{Delay and Temporal Gap}
In general, a larger temporal gap results in larger delay. According to \Cref{fig: zero delay delta}, $P_d(0)$ drops when the temporal gap $\Delta_d$ increases. In FIFO, it drops linearly and reaches zero when the equality in \eqref{eq: FIFO convergence condition simplified} holds. In FIFO, it drops with decreasing rate. According to \Cref{fig: expected delay delta}, the expected delay $E(d)$ grows with respect to the temporal gap $\Delta_d$. In FIFO, the expected delay grows exponentially. In FO, it eventually reaches a constant growth rate. The temporal gap is a design parameter in vehicle policies, which is affected by the uncertainty in perceptions. When there are larger uncertainties in perception, in order to stay safe, vehicles tend to maintain larger gaps to other vehicles. The trade-off between safety and efficiency under imperfect perception will be studied in the future.

\section{Conclusion\label{sec: conclusion}}
This paper presented a new approach to perform delay analysis for unmanaged intersections in an event-driven stochastic model. The model considered the traffic delay at an intersection as an event-driven stochastic process, whose dynamics encoded equilibria resulted from microscopic multi-vehicle interactions. With the model, the distribution of delay can be obtained through either direct analysis or event-driven simulation. 
In particular, this paper performed detailed analyses for a two-lane intersection under two different classes of policies corresponding to two different passing orders. The convergence of the distribution of delay and the steady state delay were derived through direct analysis. The relationships between traffic delay and multiple factors such as traffic flow density, unevenness of traffic flows, temporal gaps between two consecutive vehicles, and the passing order were discussed. 
In the future, such analysis will be extended to more complex vehicle policies, more complex road topologies, multiple intersections, and heterogeneous traffic scenarios.

\bibliographystyle{ieeetr}


\end{document}